\definecolor{forestgreen(web)}{rgb}{0.13, 0.55, 0.13}
\newcommand\update[2]{#2}
\newcommand{\sr}{\rule[-0.13cm]{0pt}{0.45cm}}
\renewcommand{\labelenumi}{\arabic{enumi}.}
\newtheorem{prop}{Proposition}
\newtheorem{lemma}{Lemma}
\newtheorem{theorem}{Theorem}
\newcommand{\col}[1][]{\mathscr{c \mkern -3mu o \mkern -3mu l}\mkern -3mu_#1}
\newcommand{\colprime}[1][]{\mathscr{c \mkern -3mu o \mkern -3mu l'}\mkern -3mu_#1}
\newcommand{\Listone}{\mathcal{L}}
\newcommand{\Listtwo}{\mathcal{P}}
\newcommand{\tx}[1][]{\tau_#1}
\newcommand{\state}[1][]{\mathcal{S}_#1}
\g@addto@macro{\@algocf@init}{\SetKwInOut{executed}{Actor }} 
\newcommand{\sysname}{Snappy\xspace}
\newcommand*\circled[1]{\tikz[baseline=(char.base)]{
		\node[shape=circle,draw,inner sep=1pt] (char) {#1};}}
\newcommand{\bp}{\mathsf{bp}}
\newcommand{\Gr}{\mathbb{G}}
\newcommand{\Hr}{\mathbb{H}}
\newcommand{\Tr}{\mathbb{G}_T}
\newcommand{\Z}{\mathbb{Z}}
\setlist{nosep}
\newcommand{\mypara}[1]{\smallbreak \noindent \textbf{#1}}
\newcommand{\figsaver}{\vspace{-4pt}}
\newcommand{\tabsaver}{\vspace{-4pt}}
\newcommand{\floattabsaver}{\vspace{-15pt}}
\DeclarePairedDelimiter{\ceil}{\lceil}{\rceil}
\begin{document}
\title{Snappy: Fast On-chain Payments\\ with Practical Collaterals}
		

\author{\IEEEauthorblockN{Vasilios Mavroudis}
	\IEEEauthorblockA{University College London\\
		{\small v.mavroudis@ucl.ac.uk}}
	\and
\IEEEauthorblockN{Karl Wüst}
\IEEEauthorblockA{ETH Zurich\\
	{\small karl.wuest@inf.ethz.ch}}
	\and
\IEEEauthorblockN{Aritra Dhar}
\IEEEauthorblockA{ETH Zurich\\
	{\small aritra.dhar@inf.ethz.ch}}
	\and
\IEEEauthorblockN{Kari Kostiainen}
\IEEEauthorblockA{ETH Zurich\\
	{\small kari.kostiainen@inf.ethz.ch}}
\and
\IEEEauthorblockN{Srdjan Capkun}
\IEEEauthorblockA{ETH Zurich\\
	{\small srdjan.capkun@inf.ethz.ch}}}

\IEEEoverridecommandlockouts
\makeatletter\def\@IEEEpubidpullup{6.5\baselineskip}\makeatother
\IEEEpubid{\parbox{\columnwidth}{
		\textbf{\emph{Pre-print version of a conference publication to appear in:}} \\
		Network and Distributed Systems Security (NDSS) Symposium 2020\\
		23-26 February 2020, San Diego, CA, USA\\
	}
	\hspace{\columnsep}\makebox[\columnwidth]{}}

\maketitle

\begin{abstract}
	Permissionless blockchains offer many advantages but also have significant limitations including high latency. This prevents their use in important scenarios such as retail payments, where merchants should approve payments fast. Prior works have attempted to mitigate this problem by moving transactions off the chain. However, such Layer-2 solutions have their own problems: payment channels require a separate deposit towards each merchant and thus significant locked-in funds from customers; 
	payment hubs require very large operator deposits that depend on the number of customers; and side-chains require trusted validators. 
	
	In this paper, we propose \sysname, a novel solution that enables recipients, like merchants, to safely accept fast payments. In \sysname, all payments are on the chain, while small customer collaterals and moderate merchant collaterals act as payment guarantees. Besides receiving payments, merchants also act as statekeepers who collectively track and approve incoming payments using majority voting. In case of a double-spending attack, the victim merchant can recover lost funds either from the collateral of the malicious customer or a colluding statekeeper (merchant). \sysname overcomes the main problems of previous solutions: a single customer collateral can be used to shop with many merchants; merchant collaterals are independent of the number of customers; and validators do not have to be trusted. 
	Our Ethereum prototype shows that safe, fast ($<$2 seconds) and cheap payments are possible on existing blockchains.
\end{abstract}

\section{Introduction}\label{sec:intro}

Cryptocurrencies based on permissionless blockchains have shown great potential in decentralizing the global financial system, reducing trust assumptions, increasing operational transparency and improving user privacy. However, this technology still has significant limitations, preventing it from posing as valid alternative to established transaction means such as card payments. 

One of the main problems of permissionless blockchains is \emph{high latency}. For example, in Ethereum~\cite{wood2014ethereum}, users have to wait approximately 3 minutes (10 blocks) before a new payment can be safely accepted~\cite{karame2015misbehavior,lei2015exploiting}. In comparison, traditional and centralized payment systems such as VISA can confirm payments within 2 seconds~\cite{visatime1,visatime2,emvtime1}. High latency makes permissionless blockchains unsuitable for many important applications such as point-of-sale purchases and retail payments.

To improve blockchain performance, various consensus schemes have been proposed~\cite{bano2017consensus}. While techniques like sharding and Proof-of-Stake can increase the \emph{throughput} of blockchains significantly, currently there are no promising solutions that would drastically decrease the latency of permissionless blockchains.

Thus, researchers have explored alternative avenues to enable fast payments over slow blockchains. Arguably, the most prominent approach are Layer-2 solutions that move transaction processing off the chain and use the blockchain only for dispute resolution and occasional synchronization. 
However, such solutions have their own shortcomings. For example, payment channels require a separate deposit for each channel, resulting in large locked-in funds for users such as retail customers~\cite{miller2017sprites,LindNEKPS18}. Payment networks cannot guarantee available paths and are incompatible with the unilateral nature of retail payments from customers to merchants \cite{poon2016bitcoin,network2018cheap}. Payment hubs~\cite{dziembowski2019perun,heilman2017tumblebit} require either a trusted operator or huge collaterals that are equal to the total expenditure of all customers~\cite{gudgeonsok,McCorryBBMM18}. 
Side-chains based on permissioned BFT consensus require 2/3 trusted validators and have high communication complexity~\cite{dilley2016strong,back2014enabling}.

\mypara{Our solution.}
In this paper, we introduce \sysname, a system that enables safe and fast (zero-confirmation) on-chain payments. \sysname can be used today on top of low-throughput and high-latency blockchains such as Ethereum and in the future on top of (sharded) high-throughput and mid-latency blockchains.

We tailor our solution for application scenarios such as retail payments, but emphasize that our design can be used in any scenario where a large number of users (e.g., 100,000 customers) make payments towards a moderate set of recipients (e.g., 100 merchants). In \sysname, the merchants form a joint consortium that may consist of large retailers with several stores globally or small local stores from the same neighborhood. The merchants need to communicate to be able accept fast-payments safely, but importantly neither the merchants nor the customers have to trust each other.

\sysname relies on customer collaterals that enable merchants to safely accept payments before the transaction has reached finality in the blockchain. The collaterals serve as payment guarantees and are deposited by customers to a smart contract during system enrollment. If the transaction corresponding to an accepted payment does not appear in the  blockchain within a reasonable time (double-spending attack), the victim merchant can recoup the lost funds from the malicious customer's collaterals. The customer's deposit should cover the value of the customer's purchases within the latency period of the blockchain (e.g., 3 minutes in Ethereum) which makes them small in practice (e.g., \$100 would suffice for many users). 
Moreover, customers do not need to repeatedly replenish their collaterals, as they are used only in the case of attack.

In \sysname, the payment recipients (merchants) act as \emph{untrusted statekeepers} whose task is to track and collectively approve incoming transactions. To initiate a fast payment, a customer creates a transaction that transfers funds to a smart contract and indicates the merchant as beneficiary. The recipient merchant sends the transaction to the statekeepers and proceeds with the sale only if a \emph{majority} of them approves it with their signatures. Upon collecting the required signatures, the merchant broadcasts the transaction in the blockchain network to be processed by an Arbiter smart contract. Once the transaction is processed and logged by the Arbiter, the payment value is forwarded to the merchant.

Statekeepers must also deposit collaterals that protect merchants from attacks where a malicious statekeeper colludes with a customer. In such a case, the victim merchant can use the statekeeper's approval signatures as evidence to claim any lost funds from the misbehaving statekeeper's collateral. The size of statekeeper collateral is proportional to the total amount of purchases that all participating merchants expect within the blockchain latency period. 
Crucially, the statekeeper collaterals are independent of the number of customers which allows the system to scale. The main security benefit of statekeeper collaterals is that they enable fast and safe payments without trusted parties.

\mypara{Main results.} We prove that a merchant who follows the \sysname protocol and accepts a fast payment once it has been approved by the majority of the statekeepers never loses funds regardless of any combination of customer and statekeeper collusion. We also demonstrate that \sysname is practical to deploy on top of existing blockchains by implementing it on Ethereum. The performance of our solution depends primarily on number of participating statekeepers (merchants). For example, assuming a deployment with 100 statekeepers, a payment can be approved in less than 200 ms with a processing cost of \$0.16 (169k Ethereum gas), which compares favorably to card payment fees. 

\sysname overcomes the main problems of Layer-2 solutions in application scenarios such as retail payments. In contrast to BFT side-chains that assume that $2/3$ honest validators and require multiple rounds of communication, \sysname requires no trusted validators and needs only one round of communication. Unlike payment channels, \sysname enables payments towards many merchants with a single and small customer deposit. In contrast to payment networks, \sysname payments can always reach the merchants, because there is no route depletion. And finally, the statekeeping collaterals are practical even for larger deployments, compared to those in payment hubs, as they are independent of the number of customers in the system. 

\mypara{Contributions.} This paper makes the following contributions:

\begin{itemize}
	\item \emph{Novel solution for fast payments.} We propose a system called \sysname that enables fast and secure payments on slow blockchains without trusted parties using moderately-sized and reusable collaterals that are practical for both customers and merchants.

	\item \emph{Security proof.} We prove that merchants are guaranteed to receive the full value of all accepted payments, in any possible combination of double spending by malicious customers and equivocation by colluding statekeepers.   

	\item \emph{Evaluation.} We implemented \sysname on Ethereum and show that payment processing is fast and cheap in practice.
\end{itemize}

\smallskip
This paper is organized as follows: Section~\ref{sec:problem} explains the problem of fast payments, Section~\ref{sec:overview} provides an overview of our solution and Section~\ref{sec:protocols} describes it in detail. Section~\ref{sec:analysis} provides security analysis and Section~\ref{sec:eval} further evaluation. Section~\ref{sec:discussion} is discussion, Section~\ref{sec:related_work} describes related work, and Section~\ref{sec:conclusion} concludes the paper.

\section{Problem Statement}
\label{sec:problem}

In this section we motivate our work, explain our assumptions, discuss the limitations of previous solutions, and specify requirements for our system.

\subsection{Motivation}
\label{subsec:motivation}

The currently popular permissionless blockchains (e.g., Bitcoin and Ethereum) rely on Proof-of-Work (PoW) consensus that has well-known limitations, including low throughput (7 transactions per second in Bitcoin), high latency (3 minutes in Ethereum), and excessive energy consumption (comparable to a small country~\cite{digiconomist}). Consequently, the research community has actively explored alternative permissionless consensus schemes. From many proposed schemes, two prominent approaches, \emph{Proof of Stake} and \emph{sharding}, stand out~\cite{bano2017consensus}. 

Proof of Stake (PoS) systems aim to minimize the energy waste by replacing the computationally-heavy puzzles of PoW with random leader election such that the leader selection probability is proportional to owned staked. While the current PoS proposals face various security and performance issues~\cite{bano2017consensus},
the concept has shown promise in mitigating the energy-consumption problem.

Sharding systems increase the blockchain's throughput by dividing the consensus participants into committees (\emph{shards}) that process distinct sets of transactions. Recent results reported significant throughput increases in the order of thousands of transactions per second~\cite{kokoris2018omniledger, zamanirapidchain}. Despite several remaining challenges, sharding shows great promise in improving blockchain throughput. 

Sharding and PoS can also address transaction latency. Recent works such as Omniledger~\cite{kokoris2018omniledger} and RapidChain~\cite{zamanirapidchain} use both techniques and report latencies from 9 to 63 seconds, \update{7}{assuming common trust models like honest majority or 1/3 Byzantine nodes}. However, such measurements are achieved in fast test networks (permissionless blockchains rely on slow peer-to-peer networks) and under favorable work loads (e.g., largely pre-sharded transactions). 

Our conclusion is that while permissionless blockchain throughput and energy efficiency are expected to improve in the near future, latency will most likely remain too high for various scenarios such as point-of-sale payments and retail shopping, where payment confirmation is needed within 1-2 seconds. Therefore, \emph{in this paper we focus on improving blockchain payment latency}. We consider related problems like limited blockchain throughput as orthogonal problems with known solutions. Appendix~\ref{app:consensus} provides further background on permissionless consensus.

\subsection{System Model and Assumptions}
\label{subsec:assumptions}

\mypara{Customers and merchants.} We focus on a setting where $n$ users send payments to $k$ recipients such that $n$ is large and $k$ is moderate. One example scenario is a set of $k=100$ small shops where $n=100,000$ customers purchase goods at. 
Another example is $k=100$ larger retail stores with $n=1$ million customers~\cite{tesco2016,tesco2018}. 

We consider merchants who accept \emph{no risk}, i.e., they hand the purchased products or services to the customers, only if they are guaranteed to receive the full value of their sale. Therefore, naive solutions such as accepting zero-confirmation transaction are not applicable~\cite{karame2012double}. The customers are assumed to register once to the system (similar to a credit card issuance processes) and then visit shops multiple times.

We assume that merchants have permanent and fairly reliable Internet connections. Customers are not expected to be online constantly or periodically (initial online registration is sufficient). At the time of shopping, customers and merchants can communicate over the Internet or using a local channel, such as a smartphone NFC or smart card APDU interface.

\mypara{Blockchain.} We assume a permissionless blockchain that has sufficient throughput, but high latency (see motivation). The blockchain supports smart contracts.
We use Ethereum as a reference platform throughout this work, but emphasize that our solution is compatible with most permissionless blockchains with smart contracts~\cite{hanke2018dfinity, wanchain2018, nxt2014}. To ensure compatibility with existing systems like Ethereum, we assume that smart contracts have access to the current state of the blockchain, but not to all the past transactions.

\mypara{Adversary.} 
\update{1}{The main goal of this paper is to enable secure and fast payments. Regarding payment security}, we consider a strong adversary who controls an arbitrary number of customers and all other merchants besides the target merchant who accepts a fast payment. The adversary also controls the network connections between customers and merchants but cannot launch network-level attacks such as node eclipsing~\cite{heilman2015eclipse}. The adversary cannot violate the consensus guarantees of the blockchain, prevent smart contract execution, or violate contract integrity.

\update{1}{For payment liveness, we additionally require that sufficiently many merchants are responsive (see Section~\ref{sec:liveness}).}

\subsection{Limitations of Known Solutions}
\label{subsec:limitations}

A prominent approach to enable fast payments on slow permissionless blockchains is so called Layer-2 solutions that move transaction processing off the chain and use the blockchain only in case of dispute resolution and occasional synchronization between the on-chain and off-chain states. Here, we outline the main limitations of such solutions. \update{}{Section~\ref{sec:related_work} provides more details on Layer-2 solutions and their limitations.}

\mypara{Payment channels} transfer funds between two parties. The security of such solutions is guaranteed by separate deposits that must cover periodic expenditure in each individual channel.
In our ``small shops'' example with $k=100$ merchants and an average customer expenditure of $e=\$10$, the customers would need to deposit combined \$1,000. In our ``large retailers'' example with $k=100$ merchants and expenditure of $e=\$250$, the total deposit is \$25,000. Payment channels require periodic deposit replenishment.

\mypara{Payment networks} address the above problem of having to set up many separate channels by using existing payment channels to find paths and route funds. Payments are possible only when the necessary links from the source (customer) to the destination (merchant) exist. However, payment networks are unreliable, as guaranteeing the suitable links between all parties is proven difficult~\cite{prihodko2016flare}. Moreover, retail payments are pre-dominantly one-way from customers to merchants, and thus those links get frequently depleted, reducing the route availability even further~\cite{engelmann2017towards}. 

\mypara{Payment hubs} attempt to solve the route-availability problem by having all
customers establish a payment channel to a central hub that is linked to all merchants. The main problem of this approach is that the hub operator either has to be trusted or it needs to place a very large deposit to guarantee all payments. Since the required collateral is proportional to the number of customers, in our large retailers example, a hub operator will have to deposit \$250M to support $n=1$M customers with an expenditure of $e=\$250$. To cover the cost of locking in such a large mount of funds, the operator is likely to charge substantial payment fees.

\mypara{Commit-chains} aim to improve payment hubs by reducing or eliminating operator collaterals. To do that, they rely on periodic on-chain \emph{checkpoints} that finalize multiple off-chain transactions at once. While this improves throughput, it does not reduce latency, as users still have to wait for the checkpoint to reach finality on-chain~\cite{khalil2018nc,gudgeonsok}. Other commit-chain variants enable instant payments, but require equally large collaterals as payment hubs. Additionally, in such variants users need to monitor the checkpoints (hourly or daily) to ensure that their balance is represented  accurately~\cite{khalilnocust,gudgeonsok}. We focus on retail setting where customers do not have to be constantly or periodically online (recall Section~\ref{subsec:assumptions}), and thus cannot be expected to perform such monitoring.

\mypara{Side-chains} rely on a small number of collectively trusted validators to track and agree on pending transactions. Typically, consensus is achieved using Byzantine-fault tolerant protocols~\cite{castro2002practical} that scale up to a few tens of validators and require 2/3 of honest validators. Thus, side-chains contradict one of the main benefits of permissionless blockchains, the fact that no trusted entities are required. Additionally, BFT consensus requires several communication rounds and has high message complexity.

\subsection{Requirements}
\label{subsec:requirements}

Given these limitations of previous solution, we define the following requirements for our work.

\begin{itemize}
	\item \textit{R1: Fast payments without trusted validators.} Our solution should enable payment recipients such as merchants to accept fast payments assuming no trusted validators. 

	\item \textit{R2: Practical collaterals for large deployments.} Collaterals should be pratical, even when the system scales for large number of customers or many merchants. In particular, the customer collaterals should only depend on their own spending, not the number of merchants. The entities that operate the system (in our solution, the merchants) should deposit an amount that is proportional to their sales, not the number of customers.
		
	\item \textit{R3: Cheap payment processing.} When deployed on top of an existing blockchain system such as Ethereum, payment processing should be inexpensive.
\end{itemize}

\section{\sysname Overview}
\label{sec:overview}

In our solution, \sysname, customer collaterals are held by a smart contract called \emph{Arbiter} and enable merchants to safely accept payments before a transaction reaches finality on the chain. If a pending transaction does not get confirmed on the chain (double spending attack), the victim merchant can recover the lost funds from the customer's collateral. 

For any such solution, it is crucial that the total value of a customer's pending transactions never exceeds the value of its collateral. This invariant is easy to ensure in a single-merchant setup by keeping track of the customers' pending transactions. However, in a retail setting with many merchants, each individual merchant does not have a complete view of each customer's pending payments, as each customer can perform purchases with several different merchants simultaneously. 

A natural approach to address this problem is to assume that the merchants \emph{collaborate} and collectively track pending transactions from all customers (see Figure~\ref{fig:udp}). Below, we outline simple approaches to realize such collaboration and point out their drawbacks that motivate our solution.

\subsection{Strawman Designs}
\label{subsec:strawman}

\mypara{Transaction broadcasting.} Perhaps the simplest approach would be to require that all merchants broadcast all incoming payments, so that everyone can keep track of the pending transactions from each customer. Such a solution would prevent customers from exceeding their collateral's value, but assumes that all the merchants are \emph{honest}. A malicious merchant, colluding with a customer, could mislead others by simply not reporting some of the customer's pending payments. The customer can then double spend on all of its pending transactions, thus enabling the colluding merchant to deplete its collateral and prevent other merchants from recovering their losses. The same problem would arise also in cases where a benign merchant fails to communicate with some of the merchants (e.g., due to a temporary network issue) or if the adversary drops packets sent between merchants.

We argue that such transaction broadcasting might be applicable in specific settings (e.g., large retailers that trust each other and are connected via high-availability links), but it fails to protect mutually-distrusting merchants such as small shops or restaurants, and is unsuited to scenarios where mutually-trusting merchants cannot establish expensive links.

\mypara{Unanimous approval.} Alternatively, merchants could send each incoming transaction to all other merchants and wait until each of them responds with a signed approval.
While some of the merchants may act maliciously and equivocate, rational merchants holding pending transactions from the same customer will not, as this would put their own payments at risk. The Arbiter contract would refuse to process any claims for losses, unless the merchant can provide approval statements from all other merchants.
Such unanimous approval prevents the previous attack (assuming rational actors), but it suffers from poor liveness. Even if just one of the merchants is unreachable, the system cannot process payments.

\mypara{BFT consensus.} A common way to address such availability concerns is to use Byzantine-fault tolerant protocols. For example, the merchants could use a BFT consensus such as~\cite{castro2002practical} to stay up to date with all the pending payments. Such a solution can tolerate up to $1/3$ faulty (e.g., non-responsive) merchants and thus provides increased availability compared to the previous design.
However, this solution has all the limitations of side-chains (recall Section~\ref{subsec:limitations}). In particular, BFT consensus requires $2/3$ trusted validators and several rounds of communication. Therefore, BFT side-chains are not ideal even in the case of mutually-trusting merchants. 

\vspace{-0.1cm}\subsection{\sysname Main Concepts}
\label{subsec:design}

To overcome the problems of the above strawman designs, we use two primary techniques: (a) \emph{majority approval} and (b) \emph{merchant collaterals}, such that fast payments can be accepted safely even if all the merchants are untrusted and some of them are unreachable.

The use of majority approval generates non-deniable evidence about potentially misbehaving merchants. Together with merchant collaterals, this enables attack victims to recover their losses, in case a merchant colludes with a malicious customer. In an example attack, a malicious customer sends transaction $\tx{}$ to a victim merchant and $\tx{}'$ to a colluding merchant, and simultaneously double spends both $\tx{}$ and $\tx{}'$. The colluding merchant claims the lost value from the customer's collateral which exhausts it and prevents the victim merchant from recovering losses from the customer's collateral. However, the victim can use the colluding merchant's signature from the majority approval process as evidence and reclaim the lost funds from the merchant's collateral. Given these main ideas, next we provide a brief overview of how \sysname works. 

\begin{figure}
	\begin{subfigure}[t]{0.50\linewidth}\centering 
	\includegraphics[width=0.93\textwidth]{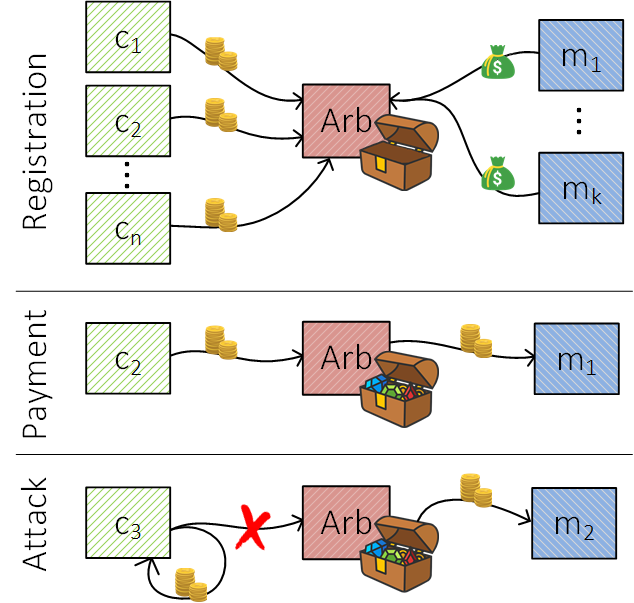}
	\hfill
	\end{subfigure}%
	\begin{subfigure}[t]{0.50\linewidth}\centering
	\includegraphics[width=0.91\textwidth]{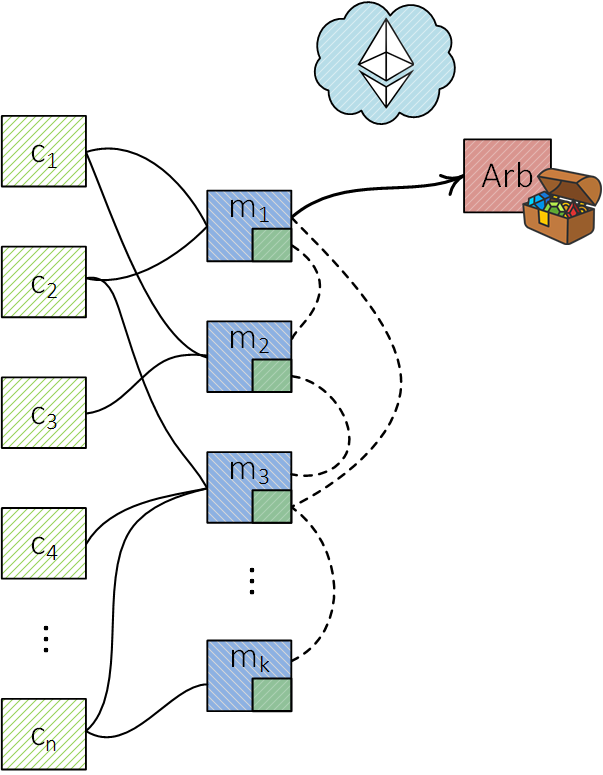}%
	\end{subfigure}
	{\renewcommand{\thefigure}{\arabic{figure} (left)}
	\caption{\textbf{Flow of funds.} Customers and merchants deposit  collaterals to the arbiter smart contract. Payments flow from customers to merchants through the arbiter. In case of attack, the victim merchant can recover any losses from the arbiter.}
	\label{fig:operations}}
	{\renewcommand{\thefigure}{\arabic{figure} (right)}
	\caption{\textbf{Example deployment} where each merchant operates one statekeeper. Customers make payments towards merchants, who consult a majority of the statekeepers before accepting them.}
	\label{fig:udp}}
	\figsaver
\end{figure}

\mypara{Collaterals.} To register in the \sysname system, each customer deposits a collateral to Arbiter's account (see ``Registration'' in Figure~\ref{fig:operations}). The value of the collateral is determined by the customer, and it should suffice to cover its expenditure $e_t$ during the blockchain latency period (e.g., $e_t=\$100$ for 3 minutes in Ethereum). 

Since merchants are untrusted, they also need to deposit a collateral. The size of the merchant collateral depends on the total value of sales that \emph{all} participating merchants process within the latency period. For example, for a consortium of $k=100$ small shops that process $p_t=6$ payments of $e_t=\$5$ on average during the latency period, a collateral of \$3,000 will suffice. In a deployment with $k=100$ larger retailers, where each one handles $p_t=15$ purchases of $e_t=\$100$ (on average) within the latency period, each merchant will need to deposit \$150,000~\cite{tesco2016,tesco2018}. We acknowledge that this is a significant deposit, but feasible for large corporations.

\mypara{Payment approval.} After registering, a customer can initiate a payment by sending a payment intent to a merchant together with a list of its previous approved but not yet finalized transactions. The merchant can verify that the provided list is complete, by examining index values that are part of each \sysname transaction. If the total value of intended payment and the previously approved payments does not exceed the value of the customer's collateral, the merchant proceeds. 

To protect itself against conflicting \sysname transactions sent to other merchants simultaneous, the merchant collects approval signatures from more than half of the participating merchants, as shown in Figure~\ref{fig:udp}. Such majority approval does not prevent malicious merchants from falsely accepting payments, but provides undeniable evidence of such misbehavior that the merchant can later use to recover losses in case of an attack. In essence, a merchant $m$ signing a transaction $\tx{}$ attests that ``$m$ has not approved other transactions from the same customer that would conflict with $\tx{}$''. Thus, $m$ needs to check each transaction against those it has signed in the past. The merchant who is about to accept a payment also verifies that the approving merchants have each sufficient collateral left in the system to cover the approved payment.
Once these checks are complete, the merchant can safely accept the payment. 

The customer construct a complete \sysname payment such that the payment funds are initially sent to Arbiter that logs the approval details into its state and after that forwards the payment value to the receiving merchant (see ``Payment'' in Figure~\ref{fig:operations}). We route all payments through the Arbiter contract to enable the Arbiter to perform claim settlement in blockchain systems like Ethereum where smart contracts do not have perfect visibility to all past transactions (recall Section~\ref{subsec:assumptions}). If future blockchain systems offer better transaction visibility to past transactions for contracts, payments can be also routed directly from the customers to the merchants. 

\mypara{Settlements.} If the transaction does not appear in the blockchain after a reasonable delay (i.e., double-spending attack took place), the victim merchant can initiate a settlement process with the arbiter to recover the lost funds (``Attack in Figure~\ref{fig:operations}). The Arbiter uses the logged approval evidence from its state to determine who is responsible for the attack, and it returns the lost funds to the victim merchant either from the collateral of the customer or the misbehaving merchant who issued false approval.

\mypara{Separation of statekeeping.} So far, we have assumed that payment approval requires signatures from merchants who track pending transactions. 
While we anticipate this to be the most common deployment option, for generality and separation of duties, we decouple the tracking and approval task of merchants into a separate role that we call \emph{statekeeper}. 
For the rest of the paper, we assume a one-to-one mapping between merchants and statekeepers, as shown in Figure~\ref{fig:udp} and in Appendix~\ref{app:singlesk} we discuss alternative deployment options. When there are an identical number of merchants and statekeepers, the value of statekeepers' collaterals is determined as already explained for merchant collaterals.

\mypara{Incentives.} There are multiple reasons why merchants would want to act as statekeepers.  One example reason is that it allows them to join a \sysname consortium, accept fast blockchain payments securely and save on card payment fees. To put the potential saving into perspective, considering our large retail stores example and the common 1.5\% card payment fee (and a cost of \$0.16 per \sysname payment). In such a case, a collateral of \$150,000 is amortized in ${\sim}37$ days. 
Potential free-riding by a consortium member could be handled by maintaining a ratio of approvals made and received for each merchant and excluding merchants who fall below a  threshold.

Another example reason is that the merchants could establish a fee-based system where each approver receives a small percentage of the transaction's value to cover the operational cost and the collateral investment needed to act as a statekeeper.

\section{\sysname Details}
\label{sec:protocols}

\noindent In this section, we describe the \sysname system in detail. We instantiate it for Ethereum, but emphasize that our solution is not limited to this particular blockchain. We start with background on aggregate signatures, followed by \sysname data structures, registration, payment and settlement protocols.

\subsection{Background on BLS Signatures}
\label{subsec:bls}

\noindent To enable signature \emph{aggregation} for reduced transaction size and efficient transaction verification, we utilize the Boneh-Lynn-Shacham (BLS) Signature Scheme~\cite{DBLP:journals/joc/BonehLS04}, along with extensions from~\cite{DBLP:conf/eurocrypt/BonehGLS03,DBLP:conf/eurocrypt/RistenpartY07}. BLS signatures are built on top of a bilinear group and a cryptographic hash function $H: \{0,1\}^{*} \mapsto \Gr$. 
A bilinear group $\bp = (p,\Gr,\Hr,\Tr,e,g,h)$
\footnote{Boneh et al.'s scheme utilises specific bilinear pairings where there is an efficiently computable isomorphism between $\Gr$ and $\Hr$ which are not implemented over Ethereum.  However, Boneh et al.~\cite{DBLP:journals/iacr/BonehDN18} observed that the proof of security does still apply with respect to the more commonly used pairings under a stronger cryptographic assumption. We assume the signature scheme is implemented with regards to bilinear groups with no known isomorphism.} 
consists of cyclic groups $\Gr$, $\Hr$, $\Tr$ of prime order $p$, generators $g$, $h$ that generate $\Gr$ and $\Hr$ respectively, and a \emph{bilinear map} $e:\Gr\times \Hr\rightarrow \Tr$ such that:
The map is bilinear, i.e., for all $u\in \Gr$ and $v\in \Hr$ and for all $a,b\in \Z_p$ we have $e(u^a,v^b)=e(u,v)^{ab}$; The map is non-degenerate, i.e., if $e(u,v)=1$ then $u=1$ or $v=1$.
There are efficient algorithms for computing group operations, evaluating the bilinear map, deciding membership of the groups, and sampling generators of the groups. 

The main operations are defined as follows:

\begin{itemize}
	\item \textit{Key generation:} A private key $x_j$ sampled from $\Z_p$, and the public key is set as $v_j = h^{x_j}$. A zero-knowledge proof of knowledge $\pi_j$ for $x_j$ can generated using a Fischlin transformed sigma protocol~\cite{DBLP:conf/crypto/Fischlin05} or by signing the public key with its private counterpart.
	\item \textit{Signing:} Given a message $m$, the prover computes a signature as $\sigma_j = H(m)^{x_j}$.
	\item \textit{Aggregation:} Given message $m$ and signatures  ${\sigma_1, \ldots, \sigma_n}$, an aggregated signature $A$ is computed as $A = \prod_{j=1}^n \sigma_j$.
	\item \textit{Verification:} The verifier considers an aggregated signature to be valid if $e\left( A,h \right) = e\left(H(m), \prod_{j =1}^{n} v_j\right).$
\end{itemize}

\subsection{Data Structures}

\begin{table}[t]
	\centering
	\resizebox{0.9\columnwidth}{!}{
	\begin{tabular}{l c c l}
	\toprule
	\thead{Field} & \thead{Symbol} & \thead{Type} & 	\thead{Description} \\ \hline 
	\addlinespace
	\textit{To} & $\tx{\textrm{to}}$ & 160-bit & Arbiters addr \\ 
	\textit{From} & $\tx{f}$ & 160-bit & Customer's addr \\ 
	\textit{Value} & $\tx{v}$ & Integer & Transferred funds \\
	\textit{ECDSA Sig.} & $v,r,s$ & 256-bits &  Tx signature triplet. \\ \addlinespace
	\textit{Data} &  &   &  \\ 
	\textit{\ \ $\hookrightarrow$ Operation} & $\tx{\textrm{op}}$ & String & e.g., ``Pay'', ``Claim'' \\ 
	\textit{\ \ $\hookrightarrow$ Merchant} & $\tx{m}$ & 160-bit & Merchant's address \\
	\textit{\ \ $\hookrightarrow$ Payment Index} & $\tx{i}$ & Integer & Monotonic counter\\
	\textit{\ \ $\hookrightarrow$ Signatures} & $\tx{A}$ & 512-bits & Aggregate signature\\ 
	\textit{\ \ $\hookrightarrow$ Quorum} & $\tx{q}$ & 256-bits & Approving parties\\ 
	\bottomrule 
	\addlinespace
	\end{tabular}
	}
	\caption{\textbf{\sysname transaction $\tau$ format.} All \sysname-specific information is encoded into the Data field of Ethereum transactions.}
	\label{tab:txstructure}
	\floattabsaver
\end{table}

\mypara{Transactions.} All interaction with the Arbiter smart contract are conducted through Ethereum transactions. We encode the \sysname-specific information in the to \emph{Data} field, as shown in Table~\ref{tab:txstructure}. Each \sysname transaction $\tau$ carries an identifier of the operation type $\tx{\textrm{op}}$ (e.g., ``Payment'', ``Registration'', ``Claim''). Moreover, transactions marked as ``Payments'' also include the address of the transacting merchant $\tx{m}$, a monotonically increasing payment index $\tx{i}$,\footnote{We note that the \sysname transaction index $\tx{i}$ is a separate field from the standard Ethereum transaction \emph{nonce}. Although both are monotonically increasing counters, the \sysname index counts payments \emph{inside} the \sysname system, while the Ethereum nonce counts all transactions by the same user, also \emph{outside} the \sysname system.}
an aggregate $\tx{A}$ of the statekeepers' approval signatures, and a vector of bits $\tx{q}$ indicating which statekeepers contributed to the aggregate signature. $\tx{v}$ denotes the amount of funds in $\tx{}$.

\begin{table}
	\centering	
	\resizebox{0.9\columnwidth}{!}{
	\begin{tabular}{l l l}
	\toprule
	\thead{Field} & \thead{Symbol} & \thead{Description} \\ \hline 
	\addlinespace

	\textit{Customers} & $C$ & Customers \\
	\textit{\ \ $\hookrightarrow$ entry} & $C[c]$ &  Customer $c$ entry\\
	\textit{\ \ \ \ $\hookrightarrow$ Collateral} & $C[c].\col{c}$ & Customer's collateral \\
	\textit{\ \ \ \ $\hookrightarrow$ Clearance}  & $C[c].cl$ &  Clearance index \\
	\textit{\ \ \ \ $\hookrightarrow$ Finalized}  & $C[c].D$ & Finalized Transactions \\
	\textit{\ \ \ \ \ \ $\hookrightarrow$ entry}  & $C[c].D[i]$ & Entry for index $i$ \\
	\textit{\ \ \ \ \ \ \ \ $\hookrightarrow$ Hash} 	  & $C[c].D[i].H(\tx{})$ & Tx Hash\\
	\textit{\ \ \ \ \ \ \ \ $\hookrightarrow$ Signatures} & $C[c].D[i].\tx{A}$ & Aggregate signature \\
	\textit{\ \ \ \ \ \ \ \ $\hookrightarrow$ Quorum}     & $C[c].D[i].\tx{q}$ & Approving parties \\
	\textit{\ \ \ \ \ \ \ \ $\hookrightarrow$ Bit}        & $C[c].D[i].b$ & Sig. verified flag \\
	\addlinespace

	\textit{Merchants} & $M$ & Merchants \\
	\textit{\ \ $\hookrightarrow$ entry} & $M[m]$ & Merchant $m$ entry \\
	\addlinespace

	\textit{Statekeepers} & $S$ &  Statekeepers \\
	\textit{\ \ $\hookrightarrow$ entry} & $S[s]$ & Statekeeper $s$ entry\\
	\textit{\ \ \ \ $\hookrightarrow$ Allocation} & $S[s].\col{s}[m]$ & Value per merchant \\

	\bottomrule 
	\addlinespace
	\end{tabular}
	}
	\caption{\textbf{Arbiter's state.} The arbiter smart contract maintains a record for each customer, merchant and statekeeper.}
	\label{tab:arb-state}
	\floattabsaver
	
\end{table}

\mypara{Arbiter state.} The Arbiter maintains state, shown in Table~\ref{tab:arb-state}, that consist of the following items: a \emph{Customers} dictionary $C$ that maps customer public keys to their deposited collateral $\col{c}$, a clearance index $cl$, and a dictionary $D$ of finalized transactions. $D$ maps transaction index values to tuples that contain the hash of the transaction H($\tx{}$), $\tx{A}$, $\tx{q}$, and a bit $b$ indicating whether the signature has been verified. 

\mypara{Customer state.} Each customer $c$ maintains an ordered list $\Listone$ with its past \emph{approved} transactions.  We denote $\mathcal{S}_c$ the subset of those transactions from $\Listone$ that are still pending inclusion in the blockchain ($\state{c} \subseteq \Listone$).

\mypara{Statekeeper state.} Each statekeeper maintains a list $\Listtwo$ of all the payment intents it has approved for each customer $c$.
 
\mypara{Merchant state.} Each merchant maintains a table $R[s]$ indicating how much collateral of each statekeeper $s$ it can claim. Initially, $R[s]=\col{s}/k$ for each statekeeper.

\subsection{Registration}
\label{sec:registration}

\noindent Customers can join and leave the system at any time, but the set of merchants and statekeepers is fixed after the initialization (in Appendix~\ref{app:generalizations} we discuss dynamic sets of merchants and statekeepers).

\mypara{Merchant registration.} Merchant $m$ submits a registration request which is a transaction that calls a function of the Arbiter contract. Once executed, the Arbiter adds a new entry $M[m]$ to its state, where $m = \tx{f}$ is the merchant's account address (public key).

\mypara{Customer registration.} Customer $c$ sends a transaction to the Arbiter with the funds
that it intents to deposit for its collateral $\col{c}$. The Arbiter contract creates a new customer entry $C[c]$, sets the customer's collateral to $\tx{v}$, and initializes the clearance index to $cl=0$. The Arbiter also initializes a new dictionary $C[c].D$ to log the registered customer's payments.

\mypara{Statekeeper registration.} The registration process for statekeepers is similar to that of customers. However, a statekeeper's registration includes also a proof of knowledge for their BLS private key $x_j$ to prevent \emph{rogue key attacks} (see~\cite{ristenpart2007power} for details). The proof is included in the data field of the registration transaction and is verified by the Arbiter. 
Additionally, the statekeeper may also define how its collateral is to be distributed between the merchants. The maximum value each merchant can claim is $S[s].d$. We consider the case where each statekeeper collateral $\col{s}$ is equally allocated between all $k$ merchants, and thus each merchant can recoup losses up to $\col{s}/k$ from statekeeper $s$.

\subsection{Payment Approval}
\label{subsec:payments}

\noindent The payment-approval protocol proceeds as follows (Figure~\ref{fig:fastpayment}):

\begin{figure}
	\centering
	\includegraphics[width=0.9\linewidth]{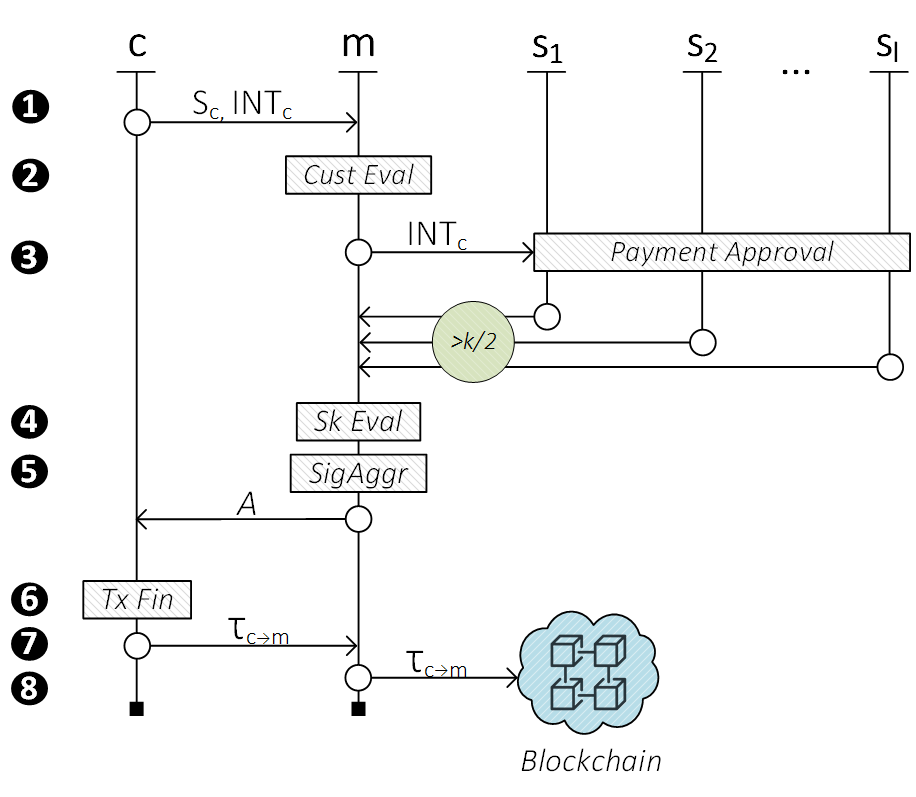}
	\caption{\textbf{Payment-approval protocol.} Customer $c$ initiates a payment to merchant $m$, who requests a majority approval from statekeepers $s_1, ..., s_l$. Merchant $m$ aggregates the received responses and forwards them back to customer $c$, who creates and broadcasts the final transaction $\tx{}$ transferring the funds to the merchant.}
	\label{fig:fastpayment}
	\figsaver
\end{figure}

\smallskip
\noindent\emph{\circled{1} Payment initialization.} To initiate a payment, customer $c$ creates a payment intent INT$_c$ and sends it to merchant $m$, together with its state $\mathcal{S}_c$ that contains a list of already approved but still pending payments. The intent carries all the details of the payment (e.g., sender, recipient, amount, index) and has the same structure as normal transaction, but without the signatures that would make it a complete valid transaction.

\smallskip
\noindent\emph{\circled{2} Customer evaluation.} Upon reception, merchant $m$ checks that all the index values in the interval $\{1 \ldots \textrm{INT}_c[i]\}$ appear either in the blockchain as finalized transactions or in $\mathcal{S}_c$ as approved but pending transactions. If this check is successful, $m$ proceeds to verify that $\col{c}$ suffices to compensate all of the currently pending transactions of that customer (i.e., $\col{c} \geq \sum{\mathcal{S}_c} + \textrm{INT}_c[v]$). Finally, merchant $m$ verifies the approval signatures of the customer's transactions both in $\state{c}$ and in the blockchain.

\smallskip 
\noindent\emph{\circled{3} Payment approval.} Next, the payment intent must be approved by a majority of the statekeepers. The merchant forwards INT$_c$ to all statekeepers, who individually compare it with $c$'s payment intents they have approved in the past (list $\Listtwo$ in their state). If no intent with the same \emph{index} $\tau_i$ is found, statekeeper $s$ appends INT$_c$ in $\Listtwo$, computes a BLS signature $\sigma_s$ over INT$_c$, and sends $\sigma_s$ back to the merchant. If the statekeeper finds a past approved intent from $c$ with the same index value, it aborts and notifies the merchant. 

\smallskip 
\noindent\textit{\circled{4} Statekeeper evaluation.} Upon receiving the approval signatures from a majority of the statekeepers, merchant $m$ uses $R[s]$ to check that each approving statekeeper $s$ has sufficient collateral remaining to cover the current payment value $\tx{v}$. In particular, the merchant verifies that none of the approving statekeepers have previously approved pending payments, whose sum together with the current payment exceeds $R[s]$. (Merchants can check the blockchain periodically and update their $R$ depending on the pending payments that got finalized on-chain.) 
Recall that initially $R[s]$ is set to $\col{s}/k$ for each statekeeper, but in case $m$ has filed a settlement claims in the past against $s$, the remaining collateral allocated for $m$ may have been reduced. This check ensures that in case one or more statekeepers equivocate, $m$ is able to recover its losses in full.

\smallskip
\noindent\textit{\circled{5} Signature aggregation.} 
If the statekeepers evaluation succeeds, $m$ aggregates the approval signatures ${\sigma_1, \ldots, \sigma_{\ceil{(k+1)/2}}}$ (i.e, $A = \prod_{j=1}^{\ceil{(k+1)/2}} \sigma_j$) and sends to customer $c$ the resulting aggregate $A$ and a bit vector $q$ indicating which statekeepers' signatures were included in $A$. Otherwise, if one or more statekeepers do not have sufficient collateral
to approve INT$_c$, the merchant can either contact additional statekeepers or abort.

\smallskip
\noindent\emph{\circled{6} Transaction finalization.} The customer signs and returns to $m$ a transaction $\tx{}$ with the details of INT$_c$, as well as $\tx{to}=$ Arbiter, $\tx{m}=m$, $\tx{A}=A$ and $\tx{q}=q$. 

\smallskip
\noindent\emph{\circled{7} Payment acceptance.} Merchant $m$ first verifies the details and the signatures of the transaction $\tx{}$, and then hands over the purchased goods or service to the customer. Finally, $m$ broadcasts $\tx{}$ to the blockchain network.
(Merchants have no incentive to withhold payments, but if this happens, the customer can simply broadcast it after a timeout.)

\smallskip
\noindent\emph{\circled{8} Payment logging.} Once $\tx{}$ is added in a block, the Arbiter records the payment
and forwards its value to merchant $m$ (Algorithm~\ref{alg:paymentforward}).
To keep the payments inexpensive in Ethereum, our smart contract does \emph{not} verify the aggregate signature $\tx{A}$ during logging. Instead, it performs the expensive signature verification only in cases of disputes, which are expected to be rare. In Section~\ref{sec:analysis}, we show that such optimization is safe.

\begin{algorithm}[t]
	\footnotesize
	\SetAlgoLined
	\executed{Arbiter (smart contract)}
	\SetKwInOut{Input}{Input}
	\SetKwInOut{Output}{Output}
	\Input{\sysname transaction $\tx{}$}
	\Output{None}
	$c \gets \tx{f}$\\
	\If{$c \in C \textrm{ and } \tx{i} \not\in C[c].D$}{
		$h \gets H(\tx{f},\tx{to},\tx{v},\tx{i})$\\
		$C[c].D[\tx{i}] \gets \langle h,\tx{A},\tx{q}, 0\rangle$\\
		$\textrm{Send}(m,\tx{v})$ \Comment{Forward the tx's funds to merchant}\\		

	}\Else{
		$\textrm{Send}(s,\tx{v})$	 \Comment{Return funds to sender}\\
	}
 \caption{\textbf{Record-and-Forward.} Arbiter records the transaction and forwards the payment value to the merchant.}
 \label{alg:paymentforward}
\end{algorithm}

\subsection{Claim Settlement}
\label{sec:settlement}

If a transaction does not appear in the blockchain within a reasonable time, the affected merchant may inspect the blockchain to determine the cause of the delay. There are three possible reasons for a transaction
not getting included in the blockchain in a timely manner:

\begin{enumerate}
	\item \textit{Benign congestion.} A transaction may be of lower priority for the miners compared to other transactions awarding higher miner fees.
	\item \textit{Conflicting transaction.} Another transaction by the same customer prevents the transaction from being included in the blockchain (doubles-pending attack). In Ethereum, two transactions from the same customer conflict when they share the same \textit{nonce} value~\cite{wood2014ethereum}. 
	\item \textit{Customer's account depletion.} A recent transaction left insufficient funds in the customer's account. Ethereum processes the transactions of each customer in increasing nonce order, and thus transactions with greater \textit{nonce} values may be invalidated by those preceding them.
\end{enumerate}

\smallskip
In the first case, the merchant must simply wait and check again later. In the latter two cases, 
the merchant can initiate a settlement claim to recoup the lost value $\tx{v}$. Such claims are sent to the Arbiter and include: (1) the pending transaction $\tx{}^p$ for which the merchant requests settlement, (2) a list $\mathbb{T}_p$ with all preceding (lower index) and pending \sysname transactions, and (3) optionally a list of conflicting transaction tuples $\mathbb{T}_{\textrm{cnfl}}$. The merchant can construct $\mathbb{T}_p$ from the $\state{c}$ (received at payment initialization) by removing the transactions that are no longer pending ($\mathbb{T}_p \subseteq \state{c}$).

\begin{algorithm}[t]
	\footnotesize
	\SetAlgoLined
	\executed{Arbiter (smart contract)}
	\SetKwInOut{Input}{Input}
	\SetKwInOut{Output}{Output}
	\Input{Pending transaction $\tx{}^{p}$ \\ Ordered list of pending transactions $\mathbb{T}_p$ \\ List of conflicting transaction tuples $\mathbb{T}_{\textrm{cnfl}}$ (optional)}
	\Output{None}
	\If{Verify($\tx{A}^p$)}{
	    $r \gets \textrm{Claim-Customer(}\tx{}^{p},\mathbb{T}_p$)\\
		\If {$r > 0$}{
			$\textrm{Claim-Statekeeper(}\tx{}^p,\mathbb{T}_{\textrm{cnfl}}, r$)\\
		}
		$c \gets \tx{f}^p$\\
		$C[c].D[\tx{i}] \gets \langle h,\tx{A},\tx{q}, 1\rangle$ \Comment{Log tx as processed}\\
	}
 \caption{\textbf{Claim-Settlement.} Arbiter processes settlement claim using customers' and statekeepers' collaterals.} \label{alg:claim}
\end{algorithm}

\mypara{Settlement overview.} Algorithm~\ref{alg:claim} describes how the Arbiter executes settlement. 
First, it checks the majority approval by verifying the BLS signature aggregate $\tx{A}^p$.
Then, it tries to recoup the transaction value $\tx{v}^p$ from the customer's collateral using the \emph{Claim-Customer sub-routine}. If the collateral does not suffice to cover the transaction value $\tx{v}$, it proceeds to claim the remaining amount from the collateral of the equivocating statekeeper using the \emph{Claim-Statekeeper sub-routine}. Finally, the contract logs $\tx{i}$ so that no further claims from the customer's collateral can be made for the same index $\tx{i}$.

\begin{algorithm}[t]
	\footnotesize
	\SetAlgoLined
	\SetKwInOut{Executed}{Actor}
	\SetKwInOut{Input}{Input}
	\SetKwInOut{Output}{Output}
	\Executed{Arbiter (smart contract)}
	\Input{Pending transaction $\tx{}^{p}$ \\ Preceding pending transactions $\mathbb{T}_p$}
	\Output{Residual $r$ or $\bot$}
	
	 $c \gets \tx{f}^{p}$\\
     $\textrm{I}^* \gets C[c].D$ \Comment{Pass by reference} \\

    \tcc{Verify signatures of preceding non-pending txs.}
	 \For {$\forall \{i \in \textrm{I}^*|\textrm{I}^*[i].b=0\}$}{ \label{line:past_verify1}
		\If {Verify($\textrm{I}^*[i]_A$)}{
			$I[i].b \gets 1$\\
		}\Else{
			$\textrm{del } I^*[\textrm{i}]$  \Comment{Past tx had no approval.}\label{line:past_verify2}\\
		}
	 }

   \tcc{Any pending \& preceding txs missing?}
     \If{$\exists\, i \in \{1 \dots \tx{i-1}^{p} \}$ such that $i \not\in \textrm{I}^* \textrm{ and } i \not\in \mathbb{T}_p$}{ \label{line:tp_complete1}
		\Return $\bot$\\  \label{line:tp_complete2}
	 }

    \tcc{Verify signatures of pending preceding txs.}
  	\If {$!\textrm{Verify(}\mathbb{T}_p \textrm{)}$}{ \label{line:tp_verify1}
			\Return $\bot$\\						  \label{line:tp_verify2}
	  }

    \tcc{Process claim.}
	 \If{$\tx{i}^{p} \not\in \textrm{I}^*$}{\label{line:not_processed2}
		 $\col{c}^* \gets C[c].\col{c}$ \Comment{Pass by reference}\\
         $cov \gets max(0,\col{c}^* - \sum \mathbb{T}_p)$ \Comment{Max claimable amount}\\  \label{line:max_claim}
	 	 $\rho \gets min(cov, \tx{v}^{p})$\\
	     $\col{c} \gets \col{c}^* - \rho$\\
   	     $\textrm{Send(}\tx{m}^{p},\rho$)\\
	     $\textrm{r} \gets \tx{v}^{p}-\rho$\\
	 	\Return{$r$}\\
	 	
	 	\label{line:return-overexp}
     }
     \Else{
	 	\Return{$\tx{v}$} \Comment{Statekeeper equivocated}\\
	 	\label{line:sk-equi}
	 }

 \caption{\textbf{Claim-Customer.} Arbiter recovers lost funds from the customer's collateral or returns the remaining amount.} \label{alg:claim_cust}
\end{algorithm}

\mypara{Claim-Customer} sub-routine is shown in Algorithm~\ref{alg:claim_cust}. First (lines~\ref{line:past_verify1}-\ref{line:past_verify2}), it verifies the signatures of preceding finalized transactions that have not been verified already (recall that signatures are not verified during Record-and-Forward to reduce payment processing cost). 
Subsequently, the contract checks that the list $\mathbb{T}_p$ contains all the pending transactions preceding $\tx{}^p$ (lines~\ref{line:tp_complete1}-\ref{line:tp_complete2}) and verifies their signatures (lines~\ref{line:tp_verify1}-\ref{line:tp_verify2}). For every transaction that
is deleted (line~\ref{line:past_verify2}), the merchant should have included another transaction with the same index and valid approval signature in $\mathbb{T}_p$. These two checks ensure that the merchant did its due diligence and verified the approval signatures of preceding pending and non-pending transactions during "Customer evaluation" step (Section~\ref{subsec:payments}).

After that, the contract checks for another valid transaction with the same index, and if there was it returns $\tx{v}^p$ to its parent process (line \ref{line:sk-equi}). In this case, the losses should be claimed from the equivocating statekeepers' collaterals. Otherwise, the arbiter compensates the merchant from the customer's collateral $\col{c}$. In case the customer's collateral does not suffice to fully recoup $\tx{v}^p$, then the algorithm returns the remaining amount (line \ref{line:return-overexp}). This amount is then used in further claims against the statekeepers. 

The \sysname system supports arbitrarily many past transactions. However, in practice, an expiration window can be used to reduce the computational load of verifying the signatures of preceding transactions. For example, a 24-hour window would allow sufficient time to accept payments and claim settlements if needed, and at the same time keep the number of past-transaction verifications small. Such time window would also reduce the storage requirements of the Arbiter smart contract, as it would only need to store the most recent transactions from each user in $C[c].D$, instead of all the past transactions. Note that valid payments will appear in the blockchain within a few minutes and thus the operators' collateral will be quickly freed to be allocated to other transactions.

\mypara{Claim-Statekeepers} sub-routine is shown in Algorithm~\ref{alg:claim_sk}. It is executed in cases where the customer's collateral does not suffice to fully recoup the value of the transaction $\tx{v}^p$ for which settlement is requested. 
In this case, the arbiter attempts to recover the lost funds from the equivocating statekeepers. 

The Arbiter iterates over the tuples of conflicting (and preceding) transactions until $\tx{v}^p$ has been fully recovered or there are no more tuples to be processed (line~\ref{line:process_tuple}). For each of those tuples, the Arbiter does the following: First, it verifies the approval signatures of the two transactions, checks that the two transactions precede $\tx{}^p$ and that they are not identical (lines~\ref{line:verify_tuple1}-\ref{line:verify_tuple2}). Then, it uses \textit{FindOverlap}() based on bit strings $\tx{q}$ of processed transactions
to identify the equivocating statekeepers (line~\ref{line:find_overlap}). Finally, it computes the amount that the merchant was deceived for ($|\tx{v}'-\tx{v}''|$) and subtracts it from the collateral of one of the statekeepers (lines~\ref{line:compute_amount1}-\ref{line:compute_amount2}). 

We only deduct the missing funds from one of the equivocating statekeepers in each tuple, as our goal is to recoup the $\tx{v}^p$ in full. However, \sysname can be easily modified to punish all the equivocating statekeepers. Recall that each merchant is responsible for ensuring that the collateral allocated by each statekeeper for them, suffices to cover the total value of pending payments approved by that statekeeper (line~\ref{line:enough_col}).

\begin{algorithm}[t]
	\footnotesize
	\SetAlgoLined
	\executed{Arbiter (smart contract)}
	\SetKwInOut{Input}{Input}
	\SetKwInOut{Output}{Output}
	\Input{Residual $r$ \\ Pending Transaction $\tx{}^p$ \\ Conflicting transaction tuples $\mathbb{T}_{\textrm{cnfl}}$}
	\Output{None}
	
	$\textrm{left} \gets r$\\
	\While{$\textrm{left} > 0 \textrm{ and } \mathbb{T}_{\textrm{cnfl}} \neq \emptyset$}{ \label{line:process_tuple}
		
    	$\langle \tx{}', \tx{}''\rangle \gets \mathbb{T}_{\textrm{cnfl}}\textrm{.pop()}$\Comment{Tuple of txs with the same idx}\\

	  	\If{$\textrm{Verify(}\tx{}' \textrm{)} \textrm{ and } \textrm{Verify(}\tx{}''\textrm{) } \textrm{and}$\\ \label{line:verify_tuple1}
			$\ \ \ \ \tx{i}' \leq \tx{i}^{p} \textrm{ and } \tx{i}'' \leq \tx{i}^{p} \textrm{ and } \tx{}' \neq \tx{}''$}{ \label{line:verify_tuple2}
			$sk \gets \textrm{FindOverlap(}\tx{}',\tx{}''\textrm{)}$\Comment{Find who equivocated}\\ \label{line:find_overlap}
			\If{$sk \neq\emptyset$}{ 
				$\delta \gets |\tx{v}'-\tx{v}''|$\\ \label{line:compute_amount1}
				 $\col{sk}^* \gets S[sk].col_sk[\tau^p_m]$ \Comment{Pass by reference}\\
				\tcc{Is there enough collateral left?}
				\If{$\col{sk}^*[m] - \delta \geq 0$}{\label{line:enough_col}
					$\col{sk}^*[m] \gets \col{sk}^*[m] - \delta$ \\ \label{line:compute_amount2}
        	    }
   	    	    $\textrm{left} \gets \textrm{left} - \delta$\\
			}
		}
	}
	
	$\rho \gets min(\tx{v}^{p},|\tx{v}^{p}-\textrm{left}|)$\\
	$\textrm{Send(}\tau^p_m,\rho$)\\

 \caption{\textbf{Claim-Statekeeper.} Arbiter sends lost funds from the misbehaving statekeepers' collaterals to the affected merchant.} 
 \label{alg:claim_sk}
\end{algorithm}

\subsection{De-registration}

Customers can leave the \sysname system at any point in time, but in the typical usage de-registrations are rare operations (comparable to credit card cancellations). The process is carried out in two steps, first by clearing the customer's pending transactions and subsequently by returning any remaining collateral. This two-step process allows enough time for any pending settlements to be processed before the collateral is returned to the customer.

The customer submits a clearance request that updates the clearance field $C[c].cl$ to the current block number.
At this point, the customer's account enters a clearance state that lasts for a predetermined period of time (e.g., 24 hours). During this time, customer $c$ can no longer initiate purchases, but $\col{c}$ can still be claimed as part of a settlement. Once the clearance is complete, the customer can withdraw any remaining collateral by submitting a withdrawal request. The Arbiter checks that enough time between the two procedures has elapsed and then returns the remaining $\col{c}$ to customer $c$.

\section{\sysname Analysis}
\label{sec:analysis}
In this section, we analyze the safety and liveness properties of \sysname.

\subsection{Safety}
First we prove that a merchant who follows the \sysname protocol to accept a fast payment is guaranteed to receive the full value of the payment (our Requirement R1) \update{1}{given the strong adversary defined in  Section~\ref{subsec:assumptions}.}

\mypara{Definitions.} We use $\mathcal{BC}$ to denote all the transactions in the chain, and $\mathcal{BC}[c]$ to refer to the subset where $c$ is the sender. We say that a customer's state $\mathcal{S}_c$ is \emph{compatible} with a transaction $\tx{}$ when all $  \tau'$ such that $ \tau' \notin \mathcal{BC}$ and $ \tau'_i < \tau_i$,  $\tau \in \mathcal{S}_c$ and $\sum \mathcal{S}_c + \tx{v}\leq \col{c}$. Less formally, compatibility means that $\mathcal{S}_c$ should include all of $c$'s pending \sysname transactions that precede $\tx{}$, 
while their total value should not exceed $\col{c}$.

\begin{theorem}\label{trm:maintheorem}
Given a customer state $\mathcal{S}_c$ that is \emph{compatible} with a transaction $\tx{}$ approved by a majority of the statekeepers with sufficient collaterals, merchant $\tx{m}$ is guaranteed to receive the full value of $\tx{}$. 
\end{theorem}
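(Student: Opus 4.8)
The plan is to argue by exhaustive case analysis on what can go wrong after merchant $\tx{m}$ accepts the payment, and show that in every case the Arbiter returns the full value $\tx{v}$ to $\tx{m}$ --- either via \texttt{Claim-Customer} (Algorithm~\ref{alg:claim_cust}) or, when that falls short, via \texttt{Claim-Statekeeper} (Algorithm~\ref{alg:claim_sk}). The first observation is that if $\tx{}$ itself reaches the blockchain, then by Algorithm~\ref{alg:paymentforward} (\texttt{Record-and-Forward}) the Arbiter forwards $\tx{v}$ to $\tx{m}$ and we are done; so the only interesting case is a double-spending attack where $\tx{}$ never gets included. In that case $\tx{m}$ files a settlement claim with $\tx{}^p = \tx{}$, a list $\mathbb{T}_p$ built from the compatible state $\state{c}$, and optional conflict tuples $\mathbb{T}_{\textrm{cnfl}}$. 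Because the signature aggregate $\tx{A}^p$ was checked by $\tx{m}$ at acceptance time (step~\circled{7}) and the statekeepers' signatures are unforgeable, the top-level \texttt{Verify} in Algorithm~\ref{alg:claim} succeeds, so the claim is processed rather than rejected.

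Next I would establish that \texttt{Claim-Customer} does not abort prematurely, i.e. it never hits the $\bot$ branches (lines~\ref{line:tp_complete2} and~\ref{line:tp_verify2}). The index-continuity check $\{1,\dots,\tx{i}-1\}$ that merchant $\tx{m}$ performed in step~\circled{2} (``Customer evaluation'') together with \emph{compatibility} of $\state{c}$ guarantees that every preceding index is present either as a finalized transaction in $C[c].D$ or in $\mathbb{T}_p \subseteq \state{c}$; the only subtlety is the possibility that the Arbiter \emph{deletes} a finalized entry in line~\ref{line:past_verify2} because its stored approval signature fails verification, but then --- since $\tx{m}$ verified all approval signatures of finalized and pending preceding transactions during step~\circled{2} --- that index must reappear in $\mathbb{T}_p$ with a valid signature, so the completeness check still passes and the \texttt{Verify}$(\mathbb{T}_p)$ call succeeds. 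Hence the algorithm reaches the ``Process claim'' block.

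Then comes the accounting. If no conflicting transaction with index $\tx{i}^p$ was finalized ($\tx{i}^p \notin \textrm{I}^*$), the Arbiter pays $\rho = \min(cov,\tx{v}^p)$ from $\col{c}$ where $cov = \max(0,\col{c} - \sum\mathbb{T}_p)$. By compatibility, $\sum\state{c} + \tx{v} \le \col{c}$, and $\mathbb{T}_p$ is the still-pending subset of $\state{c}\setminus\{\tx{}\}$, so $\col{c} - \sum\mathbb{T}_p \ge \tx{v}^p$ \emph{at the time $\tx{m}$ accepted}; the residual $r = \tx{v}^p - \rho$ is therefore due solely to other claims having depleted $\col{c}$ in the meantime, and every such depleting claim corresponds to a finalized-or-claimed conflicting transaction of $c$ at some index $\le \tx{i}^p$. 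If instead $\tx{i}^p \in \textrm{I}^*$, the residual is the full $\tx{v}^p$ and the conflict is a statekeeper equivocation. In either sub-case the residual $r$ is passed to \texttt{Claim-Statekeeper}. There I would argue that for each unit of residual there is a conflicting tuple $\langle\tx{}',\tx{}''\rangle$ in $\mathbb{T}_{\textrm{cnfl}}$ with matching index, and \texttt{FindOverlap} identifies a statekeeper in the intersection of the two quorum bit-vectors $\tx{q}$ --- such a statekeeper exists because two majorities of size $\ceil{(k+1)/2}$ must overlap, and that statekeeper signed two conflicting intents at the same index, which it promised in step~\circled{3} never to do. The key invariant, enforced by $\tx{m}$ in the ``Statekeeper evaluation'' step~\circled{4} via the $R[s]$ bookkeeping, is that each approving statekeeper's remaining allocation $S[s].\col{s}[m] \ge R[s]$ still covers $\tx{v}$, so the $\col{sk}^*[m] - \delta \ge 0$ test on line~\ref{line:enough_col} passes and $\delta = |\tx{v}'-\tx{v}''| \ge$ (the contribution of this tuple to the residual) is actually deducted and forwarded to $\tx{m}$. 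Summing over the tuples recovers the residual in full, so $\tx{m}$ receives $\rho_{\text{cust}} + \rho_{\text{sk}} = \tx{v}^p$.

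The main obstacle I anticipate is the bookkeeping in the last step: precisely matching the residual left unpaid by \texttt{Claim-Customer} to the set of conflicting tuples that the honest merchant can actually produce, and showing the statekeeper collaterals were never ``double-spent'' across several victim merchants' claims. This requires a global invariant --- roughly, that the sum over all merchants of the pending payments approved by a given statekeeper never exceeds that statekeeper's collateral, maintained because each merchant independently enforces the $R[s]$ check and the Arbiter decrements $S[s].\col{s}[m]$ per-merchant --- and carefully tracking that the index-ordering in $\mathbb{T}_p$ and the deletion/re-insertion dance on line~\ref{line:past_verify2} do not let an adversarial merchant (controlling all statekeepers but one) slip a bogus transaction past the completeness checks. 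Everything else is a fairly mechanical walk through the three algorithms.
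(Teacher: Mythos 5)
Your proposal follows essentially the same approach as the paper: the same case split on whether $\tx{}$ reaches the blockchain, then on whether $\colprime{c}$ still covers $\tx{v}$, with the pigeonhole overlap of two majority quorums identifying an equivocating statekeeper and the merchant's $R[s]$ check guaranteeing that statekeeper's per-merchant collateral allocation suffices. The ``main obstacle'' you flag at the end --- showing that any depletion of $\col{c}$ below $\tx{v}$ must stem from a claim on a transaction sharing an index with $\tx{}$ or with something in $\state{c}$, because line~\ref{line:max_claim} reserves funds for all preceding pending transactions --- is exactly what the paper isolates and proves as Lemma~\ref{lmm:coldepl_doubleindex}, so your outline is the paper's proof with that lemma left as the acknowledged remaining work.
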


\begin{proof}
A \sysname transaction $\tx{}$ transfers funds from customer $c$ to merchant $m$. Here, we use $\col{c}$ to denote the value of the $c$'s collateral when $\tx{}$ was approved, and $\colprime{c}$ to denote the collateral's value at a later point in time. Note that $\colprime{c}$ is potentially smaller than $\col{c}$ as several settlement claims may have been processed in the meantime.

To prove Theorem~\ref{trm:maintheorem}, we consider both the case where $\tx{}$ is included in the blockchain $\mathcal{BC}$, and the case it is not. 
%
In the former case ($\tx{} \in \mathcal{BC}$), merchant $m$ is guaranteed to receive the full payment value $\tx{v}$, as the adversary cannot violate the integrity of the Arbiter smart contract or prevent its execution. Once transaction $\tx{}$ is in the blockchain, the Arbiter contract executes Record-and-Forward  (Algorithm~\ref{alg:paymentforward}) that sends the full payment value $\tx{v}$ to merchant $m$.

In the latter case ($\tx{} \not\in \mathcal{BC}$), the merchant sends a settlement claim request that causes the Arbiter contract to execute Claim-Settlement (Algorithm~\ref{alg:claim}). There are now two cases to consider: either $\colprime{c}$ suffices to cover $\tx{v}$ or it does not.
In the first case ($\tx{v} \leq \colprime{c}$), merchant $m$ can recover all lost funds from 
$\colprime{c}$. For this process, $m$ needs to simply provide the approved $\tx{}$ and the list of \sysname transactions from customer $c$ that preceded $\tx{}$ and are still pending. 
The Arbiter contract verifies the validity of the inputs and sends the payment value $\tx{v}$ to merchant $m$ from $\colprime{c}$.
In the latter case ($\tx{v} > \colprime{c}$), merchant $m$ can still recoup any lost funds, as long as $\mathcal{S}_c$ was \emph{compatible} with $\tx{}$ and $m$ followed the \sysname payment approval protocol and verified before payment acceptance that the approving statekeepers' remaining collaterals $R[s]$ suffice to cover $\tx{v}$ and all other pending transactions previously approved by the same statekeeper.

According to Lemma~\ref{lmm:coldepl_doubleindex}, if $\tx{v} > \colprime{c}$, then there are more than one approved transactions from customer $c$ with the same \textit{index} value $\tx{i}$. As shown by Proposition~\ref{lmm:doubleindex_sk_eq}, for this to happen, one or more statekeepers need to approve two transactions with the same index (i.e., equivocate).
The arbiter contract can find the equivocating statekeepers by comparing the quorum bit vectors $\tx{q}$ and $\tx{q}'$ from the conflicting transaction tuples, and recoups all lost funds from their collaterals.\footnote{Lines~\ref{line:tp_complete1}-\ref{line:tp_complete2} in the Claim-Customer sub-routine (Algorithm~\ref{alg:claim_cust}) force any actor who claims $\tx{v}$ to publish a list with transactions that are pending and have \textit{indices} preceding $\tx{i}$. This enables other merchants to identify conflicting transaction pairs, find the equivocating statekeepers and claim from their collaterals.}

In \sysname, it is the responsibility of the merchant to verify that each statekeeper who approves a payment has sufficient collateral remaining. Before payment acceptance, the merchant verifies that the sum of all approved but pending payments and the current payment are fully covered by $R[s]$ which is initially $\col{s}/k$ but can be reduced by previous settlement claims. Since one merchant cannot claim collateral allocated for another merchant, the merchant is guaranteed to be able recover the full payment value, even if one or more statekeepers equivocate to several merchants simultaneously.
\end{proof}

\noindent\emph{Lemmas.} We now provide proofs for Lemma~\ref{lmm:coldepl_doubleindex} and Proposition~\ref{lmm:doubleindex_sk_eq} used above.

\begin{lemma}\label{lmm:coldepl_doubleindex}
Let $\mathcal{S}_c$ be state of customer $c$ that is \emph{compatible} with a transaction $\tx{}$,
if at any point $\colprime{c} < \tx{v}$ and $\tx{}$ has not been processed, then there exists an approved $\tx{}'$ such that $\tx{}'\not\in \mathcal{S}_c$, and 
$\tx{}'$ has the same index either with $\tx{}$ or a transaction in $\mathcal{S}_c$.
\end{lemma}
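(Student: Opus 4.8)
\noindent\emph{Proof plan.}
I would argue by contraposition: assuming that \emph{no} approved transaction $\tx{}' \notin \mathcal{S}_c$ shares an index with $\tx{}$ or with any transaction of $\mathcal{S}_c$, I would show that $\colprime{c} \ge \tx{v}$ holds at every moment after $\tx{}$ has been approved, contradicting the hypothesis $\colprime{c} < \tx{v}$. The ingredients are: (a) the customer's collateral is decreased \emph{only} by the Claim-Customer subroutine (Algorithm~\ref{alg:claim_cust}); (b) a Claim-Customer call that actually deducts, for a transaction of index $i$, removes an amount that is at most that transaction's value \emph{and} at most $\max(0,\col{c}-\sum \mathbb{T}_p)$, and afterwards inserts $i$ into $C[c].D$, so at most one deduction is ever made for a given index; and (c) because $\tx{}$ is never processed, $\tx{i}\notin C[c].D$ and $\tx{}\notin\mathcal{BC}$ hold throughout, so $\tx{}$ stays pending, and --- by \emph{compatibility} together with the no-double-index assumption --- every still-pending transaction of index below $\tx{i}$ whose index is not yet in $C[c].D$ is a member of $\mathcal{S}_c$.

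The core of the proof is an invariant. Let $P$ denote the transactions of $\mathcal{S}_c$ that are still pending and whose index is not yet in $C[c].D$, and put $B := \tx{v} + \sum_{\tau'\in P}\tau'_v$. By compatibility of $\mathcal{S}_c$ with $\tx{}$ we have $B\le\sum \mathcal{S}_c+\tx{v}\le\col{c}$ right after $\tx{}$ is approved, and $B\ge\tx{v}$ always. I would then show $\colprime{c}\ge B$ is preserved by every state change. Inclusion of a transaction on-chain leaves $\colprime{c}$ unchanged (Record-and-Forward pays from the transferred funds, not the collateral) and does not increase $B$. A Claim-Customer call whose index already lies in $C[c].D$ performs no deduction and alters neither quantity. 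A Claim-Customer deduction for a transaction of index $i<\tx{i}$ is, by (c), a deduction against a still-pending member $\tau'$ of $\mathcal{S}_c$: $\colprime{c}$ drops by at most $\tau'_v$ while $B$ drops by at least $\tau'_v$ (once $i$ enters $C[c].D$, $\tau'$ leaves $P$), so $\colprime{c}-B$ does not decrease. A deduction for index $i=\tx{i}$ cannot occur, as it would process $\tx{}$.

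The remaining case --- a Claim-Customer deduction for a transaction $\tau'$ of index $i>\tx{i}$ --- is where I expect the real work. Here the completeness and signature checks of Algorithm~\ref{alg:claim_cust} (lines~\ref{line:tp_complete1}--\ref{line:tp_complete2} and~\ref{line:tp_verify1}--\ref{line:tp_verify2}) force the claimant to place inside $\mathbb{T}_p$ an approved transaction for every index in $\{1,\dots,i-1\}$ not already in $C[c].D$; under the no-double-index assumption the only such approved transactions are $\tx{}$ itself and the still-pending members of $\mathcal{S}_c$, whence $\sum \mathbb{T}_p\ge B$. Since the deducted amount is at most $\max(0,\colprime{c}-\sum \mathbb{T}_p)\le\max(0,\colprime{c}-B)$, the collateral after the deduction is at least $\min(\colprime{c},B)\ge B$, while $B$ is unaffected (no index $\le\tx{i}$ changes). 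Chaining the cases gives $\colprime{c}\ge B\ge\tx{v}$ at all times, the desired contradiction. The delicate step is precisely this one: showing that settling a \emph{future} payment is, through the mandatory completeness of $\mathbb{T}_p$, forced to ``reserve'' enough of $\col{c}$ for $\tx{}$ and the earlier still-pending transactions --- together with the bookkeeping needed to make the invariant survive arbitrary interleavings of settlements and on-chain inclusions.
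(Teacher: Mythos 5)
Your proposal is correct and rests on exactly the same two pillars as the paper's own proof: under the no-duplicate-index assumption, compatibility forces any collateral-depleting claim to concern either a lower-index member of $\mathcal{S}_c$ or a higher-index transaction, and line~\ref{line:max_claim} of Algorithm~\ref{alg:claim_cust} (via the mandatory completeness of $\mathbb{T}_p$) reserves enough of $\col{c}$ to cover $\tx{}$ against higher-index claims. The only difference is presentational: you package the argument as an invariant $\colprime{c}\ge B$ preserved across arbitrary interleavings of settlements and on-chain inclusions, which is a more careful rendering of the contradiction the paper states directly.
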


\begin{proof}
Let $\colprime{c}$ be the funds left in the customer's collateral after a set of claims $\Pi$ were processed i.e., $\colprime{c} = \col{c} - \sum \Pi$. Let's now assume that $\colprime{c}<\tx{v}$, but there is no $\tx{}'$ that (1) is not in  $\mathcal{S}_c$ and (2) has the same index with $\tx{}$ or with a transaction in $\mathcal{S}_c$. In other words, let's assume that no two approved transactions have the same index $i$.

From $\tx{v} > \colprime{c}$, it follows that $\sum \Pi + \tx{v} > \col{c}$. By definition $\sum \mathcal{S}_c + \tx{v} \leq \col{c}$. Thus, it follows that:
$$\sum \mathcal{S}_c + \tx{v} < \sum \Pi + \tx{v} \Rightarrow \sum \mathcal{S}_c < \sum \Pi$$
From this, we derive that $\Pi \not\subset \mathcal{S}_c$.
Since, $\Pi$ is not a subset of $\mathcal{S}_c$, there is at least one transaction $\tx{}' \in \Pi$ such that $\tx{}' \not\in \mathcal{S}_c$.
$\mathcal{S}_c$ is compatible with $\tx{}$, and thus $\mathcal{S}_c$ contains \textit{all} the pending transactions up to $\tx{i}$. As a result, a pending $\tx{}'$ that is not included in $\mathcal{S}_c$ must have $\tx{i}'$ greater than $\tx{i}$. 
Note that if $\tx{}'$ is not pending, the Arbiter contract will not process the settlement claim (line~\ref{line:not_processed2} in  Algorithm~\ref{alg:claim_cust}). Thus, $\tx{i}<\tx{i}'$.

According to Algorithm~\ref{alg:claim_cust}, any claim for $\tx{}'$ should include all transactions preceding $\tx{}'$ that are still pending. Since, $\tx{}$ is pending and $\tx{i}<\tx{i}'$, the claim should also include $\tx{}$.
However, Line~\ref{line:max_claim} in Algorithm~\ref{alg:claim_cust} ensures that enough funds are left in the customer's collateral to cover all the preceding-pending transactions. This covers $\tx{}$ too, and contradicts our starting premise $\tx{v} > \colprime{c}$.
We have so far shown that:

\renewcommand{\labelenumi}{(\arabic{enumi})}
\begin{enumerate}
	\item If $\colprime{c} < \tx{v}$, then the arbiter processed a claim for a transaction $\tx{}'$ that is not included in $\mathcal{S}_c$.
	\item The collateral will always suffice to cover $\tx{}$, 
even if other claims for transactions with greater index values (i.e., $\tx{i}'>\tx{i}$)
are processed first.
\end{enumerate}

From (1) and (2), it follows that $\tx{i}' \leq \tx{i}$.
\end{proof}

\begin{prop}\label{lmm:doubleindex_sk_eq}
For any two \textit{majority} subsets $M_1 \subseteq S$ and $M_2 \subseteq S$,
where $S$ is the set of all the statekeepers, it holds that $M_1 \cap M_2 \neq \emptyset$.
\end{prop}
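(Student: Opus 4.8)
The plan is to prove this by a straightforward counting argument, since \textit{majority} here means each $M_i$ contains strictly more than half of the statekeepers. Concretely, writing $k = |S|$, the protocol requires $|M_1|, |M_2| \geq \ceil{(k+1)/2}$, and in any case $|M_1| > k/2$ and $|M_2| > k/2$, so that $|M_1| + |M_2| > k$.

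First I would recall the standard inclusion--exclusion identity for finite sets, $|M_1 \cap M_2| = |M_1| + |M_2| - |M_1 \cup M_2|$. Then, since $M_1 \cup M_2 \subseteq S$, we have $|M_1 \cup M_2| \leq |S| = k$. Combining these gives $|M_1 \cap M_2| \geq |M_1| + |M_2| - k > k - k = 0$, hence $M_1 \cap M_2$ is nonempty.

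The only thing worth being careful about is the exact definition of ``majority,'' i.e. fixing that a majority subset has size at least $\ceil{(k+1)/2}$ (the quorum size used in step \circled{5} of the payment-approval protocol), and checking that two such sets indeed have sizes summing to more than $k$: $2\ceil{(k+1)/2} \geq k+1 > k$. This holds for both even and odd $k$, so no case split is genuinely needed. There is no real obstacle here --- the statement is the classical quorum-intersection property --- and the proof is a two-line calculation; the substance of the paper's argument lies in Lemma~\ref{lmm:coldepl_doubleindex} and Theorem~\ref{trm:maintheorem}, which invoke this proposition to conclude that a double-spend exceeding a customer's collateral forces some statekeeper into both quorums and hence into equivocation.
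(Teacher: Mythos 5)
Your proof is correct and takes essentially the same approach as the paper, which simply invokes the pigeonhole principle; your inclusion--exclusion computation $|M_1 \cap M_2| \geq |M_1| + |M_2| - k > 0$ is just that counting argument written out explicitly, with the helpful added care of pinning the quorum size to $\ceil{(k+1)/2}$.
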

\begin{proof}
The proof of this proposition follows directly from the Pidgeonhole principle 
(or Dirichlet's box principle) 
which states that if $n$ items are put in $m$ containers for $n>m$ then at least one container contains multiple items.
\end{proof}

\mypara{No penalization of honest statekeepers.} Above we showed that an honest merchant who accepts a \sysname payment will always receive his funds. Additionally, \sysname ensures that an honest statekeeper cannot be penalized due to benign failures such as network errors or crashes. As outlined in Section~\ref{sec:protocols}, a merchant is penalized only if it approves two conflicting transactions (same index, same customer). This simple policy is sufficient to protect all honest merchants in the case of benign failures.

\mypara{Privacy}
Payment privacy is largely orthogonal to our solution and  inherited from the underlying blockchain. For example, if \sysname is built on Ethereum, for most parts \sysname customers and merchants receive the level of privacy protection that Ethereum transactions provide. 
In Section~\ref{sec:discussion}, we discuss privacy in more detail.

\subsection{Liveness}
\label{sec:liveness}

Next, we explain the liveness property of \sysname.
Payment processing is guaranteed when at least $\ceil{(k+1)/2}$ of the statekeepers are reachable and responsive. If we assume that all the statekeeping nodes are equally reliable then each one of them should have an availability of only $\ceil{(k+1)/2}$. 
\update{1}{We note that \sysname ensures safety and liveness under slightly different conditions. In that regard Snappy is similar to many other payment systems where liveness requires that the payment processor is reachable but the same is not needed for payment safety.}

\section{\sysname Evaluation}
\label{sec:eval}

In this section we evaluate \sysname and compare it with previous solutions.

\subsection{Latency}
\label{subsec:performance}

\noindent In \sysname, the payment approval latency depends on two factors: (a) the number of approving statekeepers and 
(b) the speed/bandwidth of the network links between the merchants and the statekeepers. The number of customers 
and merchants has no effect on the approval latency and its impact on the collateral size is discussed in Section~\ref{sec:scalability}.

To evaluate our Requirement R1 (fast payments), we simulated a setup with several globally-distributed statekeepers and merchants running on Amazon EC2 instances. Both the statekeepers and the merchants were implemented as multi-threaded socket servers/clients in Python 3.7 and used low-end machines with $2$ vCPUs and $2$ GB of RAM. We distributed our nodes in $10$ geographic regions ($4$ different locations in the US, $3$ cities in the EU, and $3$ cities in the Asia Pacific region).

As seen in Figure~\ref{fig:tx}, we tested the payment approval latency for different numbers of statekeepers and various rates of incoming requests. In our first experiment, we initialized 100 merchants who collectively generate 1,000 approval requests per second. We observe that for smaller consortia of up to 40 statekeepers (i.e., at least 21 approvals needed), \sysname approves the requests within 300ms, while larger consortia require up to 550ms on average. This shows that the approval latency increases sub-linearly to the number of statekeepers and experimentally validates the low communication complexity of \sysname.
In particular, the latency doubles for a 5-fold increase in the consortium's size i.e., 40 statekeepers require ${\sim}$300ms to collectively approve a request while 200 statekeepers require ${\sim}$550ms. We also tested our deployment for higher loads: 2,500 and 5,000 requests per second respectively. Our measurements show a slightly increased latency due to the higher resource utilization at the statekeeping nodes. However, the relationship between the number of statekeepers and the approval latency remains 
sub-linear. \update{5}{We observe, in Figure~\ref{fig:tx}, that the variance increases both with the number of statekeepers and the throughput. However, in all cases the payment approvals were completed in less than a second.}

These results demonstrate the practicality of \sysname as it remains under the 2-second approval latency mark that we consider a reliable user experience benchmark~\cite{visatime1,visatime2,emvtime1}.
The measured latencies are consistent with past studies measuring the round-trip times (RTT) worldwide~\cite{hoiland2016measuring} and within the Bitcoin network~\cite{fadhil2016bitcoin,ben2018vivisecting}. In particular, the majority of the bitcoin nodes have an RTT equal or less to 500ms, while only a tiny fraction of the network exhibit an RTT larger than 1,5 seconds. 

An optimized \sysname deployment that uses more capable machines will likely achieve better timings for the aforementioned loads and even larger statekeeping consortia. We did not perform such performance optimizations, as \sysname is best suited to deployments where the number of statekeepers is moderate (e.g., $k=100$). For larger consortia, the statekeeper collaterals grow large (Section~\ref{subsec:collaterals}) and a centralized-but-trustless deployment is
preferable (Appendix~\ref{app:singlesk}).

\begin{figure}
	\centering
	\includegraphics[width=0.9\linewidth]{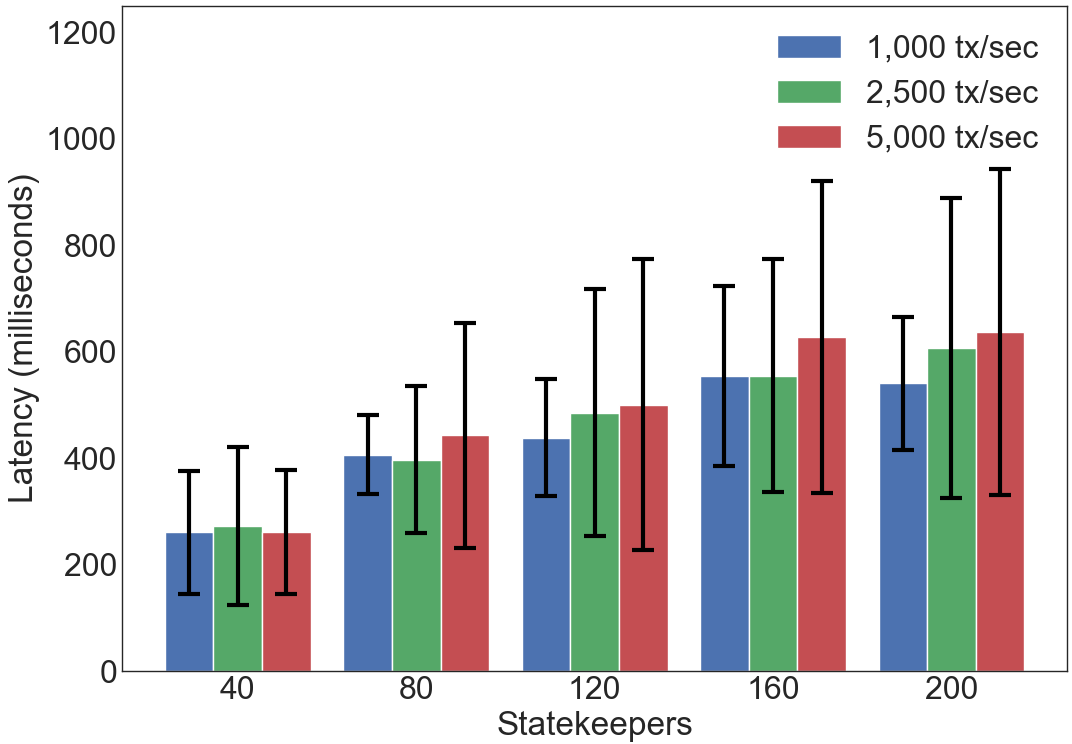}
	\caption{\textbf{Payment approval latency.}
	Payment approval latency for varying rates of incoming approval requests that each corresponds to one purchase.
	}
	\label{fig:tx}
	\figsaver
\end{figure}

\subsection{Scalability}\label{sec:scalability}
We now evaluate how many customers and merchants \sysname can handle (Requirement R2--large deployments).

Regarding the number of customers, the only scalability issue is that the recent transactions of each customer (e.g., past 24 hours) need to be recorded to the Arbiter's state. Thus, its storage needs grow linearly to the number of customers. Since Ethereum contracts do not have a storage limit, our implementation can facilitate hundreds of thousands or even millions of customers. In blockchain systems that allow smart contracts to examine past transactions, there is no need for the Arbiter to log transactions into the contract's state and the required storage is reduced significantly.

As shown in the previous section, \sysname can easily support 100-200 statekeeping merchants which is the 
intended usage scale of our solution. Moreover, due to the low communication complexity of our protocol, an optimized 
deployment could also support a few thousand statekeeping merchants with approval latency of less than 4 seconds.
However, in cases of very large deployments with several thousands merchants, it is preferable to allow
merchants to decide if they want to deposit a collateral and perform statekeeping tasks 
or simply be able to receive \sysname payments. Such a deployment remains trustless towards the 
statekeeping merchants and decentralized (but not fully), while it can support a much larger number of non-statekeeping merchants.
This design is further discussed in Appendix~\ref{app:dicentr}.

\subsection{Processing Cost}
\label{subsec:cost}

To evaluate our Requirement R3 (cheap payments), we implemented the Arbiter smart contract in Solidity for Ethereum, and measured the Ethereum gas cost of all \sysname operations. \update{6}{Our cost evaluation corresponds to a case where merchants run a \sysname instance non-profit. Joining a non-profit consortium allows merchants to accept fast and safe payments without having to pay fees to external entities such as card payment processors. Additional fees may be charged in for-profit setups.}
Table~\ref{tab:gascost} summarizes our results and provides the USD equivalents using the current conversion rate and a Gas price (Gwei) of 7.8.

\mypara{Registration cost.}
The one-time registration cost for merchants and customers is very low ($67,000$ Gas that equals to \$$0.06$), while statekeepers have to pay a slightly increased cost (\$$0.48$), due to verification of the proof of knowledge for the statekeeper's BLS private key to prevent \textit{rogue key} attacks~\cite{ristenpart2007power}.

The cost of the collateral clearance and withdrawal operations for both customers and statekeepers are also inexpensive, requiring \$$0.04$ and \$$0.02$.

\begin{table}[t]
	\centering
	\footnotesize
	\caption{Cost of \sysname operations.}
	\label{tab:gascost}
	\tabsaver
	\begin{tabular}{l r r}
		\toprule
		\textbf{Function} & \textbf{Gas} & \textbf{USD}\\
		\midrule
                Client/Merchant Registration & $67,000$ & $0.06$\\
                Statekeeper Registration & $510,000$ & $0.48$\\
                Clear Collateral & $42,000$ & $0.04$\\
                Withdraw Collateral& $23,000$ & $0.02$\\
                \textbf{Process Payment} & $169,000$ & \textbf{0.16} \\
                \bottomrule
	\end{tabular}
\end{table}

\mypara{Payment cost.}
The cost of a payment, in the absence of an attack, is $169,000$ Gas (\$$0.16$), mostly due to the cost of storing information about the transaction. This is roughly eight times as expensive as a normal Ethereum transaction that does not invoke a smart contract. In comparison, Mastercard service fees are ${\sim}1.5$\% of the transaction value~\cite{mastercardfees}. For example, a payment of \$$15$ will cost to the merchant $\$0.22$, while in a payment of $\$100$ the fees will rise to \$$1.5$.
\sysname compares favorably to these charges. BLS signatures~\cite{DBLP:journals/joc/BonehLS04} enabled us to aggregate the approval signatures, significantly reducing both the transaction size and the processing costs.

\mypara{Claim settlement cost.}
While our solution enables merchants to accept payments from customers with arbitrarily many pending transactions (constrained only by the customer's own collateral), the Ethereum VM and the block gas limits constrain the computations that can be performed during the settlement process. 
To examine these constraints, we consider the gas costs for different numbers of pending transactions, and statekeeper quorum sizes. While the number of the statekeepers is seemingly unrelated, it affects the cost of the settlement due to the aggregation of the public keys. In Appendix~\ref{app:generalizations}, we discuss a special case, where each customer is allowed to have only one transaction pending which simplifies settlement and reduces its cost significantly.

\begin{table}[t]
	\centering	
	\caption{Worst-case claim settlement cost in Gas and USD.}
	\label{tab:settlementcost}
	\tabsaver
	\resizebox{\columnwidth}{!}{
	\begin{tabular}{c c c c c}
		\toprule
		\multirow{2}{*}{\shortstack[c]{\textbf{Minimum}\\\textbf{Majority}}}
		& \multicolumn{4}{c}{\textbf{Pending Transactions per Customer}} \\\cline{2-5}
		& \sr 0 & 1 & 2 & 3 \\\hline
		\textbf{50}  & 1.9M (\$1.79) & 2.7M (\$2.54) & $3.5$M (\$$3.30$) & $4.3$M (\$$4.05$)\\
		\textbf{100} & 3.1M (\$2.92) & 4.3M (\$4.05) & $5.5$M (\$$5.19$) & $6.6$M (\$$6.22$)\\
		\textbf{150} & 4.2M (\$3.96) & 5.8M (\$5.47) & 7.4M (\$$6.98$) & $9.0$M (\$$8.49$)\\
		\textbf{200} & 5.4M (\$5.09) & 7.4M (\$6.984) & 9.4M (\$$8.87$) & $11.4$M (\$$10.75$)\\
		\textbf{250} & 6.6M (\$6.22) & 9.0M (\$8.494) & 11.3M (\$$10.66$) & $13.7$M (\$$12.93$)\\
		\bottomrule
	\end{tabular}
	}
\end{table}

Table~\ref{tab:settlementcost} shows the worst-case gas costs of settlement which is few dollars for a typical case (e.g., \$1.79 when $k=100$ and there is no pending transactions).\footnote{These costs were calculated by assuming that the adversary specifically crafts previous transactions to maximize the computational load of the settlement. To prevent an adversary from increasing the claim processing costs, the Arbiter contract could reject any transactions that have more approvals than the minimum necessary.}
Given Ethreum's gas limit of approximately $8$M gas units per block, \sysname claim settlement can either scale up to $k=499$ statekeepers ($250$ approvals) with no pending transactions ($6.6$M Gas), or to $3$ pending transactions per customer ($6.6$M Gas) with $k=199$ statekeepers ($100$ approvals). 

\subsection{Collateral Comparison}
\label{subsec:collaterals}

To evaluate our Requirement R2 (practical collaterals), we compare our deposits to known Layer-2 solutions, as shown in Table~\ref{tbl:opcost}. For our comparisons, we use example parameter values that are derived from real-life retail business cases. 

Our ``small shops'' scenario is based on sales numbers from~\cite{campion2011transit, pelham2010promoting, list2009economics}. The system has $n=100,000$ customers and $k=100$ merchants. The daily average expenditure of a customer per merchant is $e=\$10$ and the expenditure of a customer within $t=3$ minutes blockchain latency period is $e_t=\$5$. The number payments received by a merchant within the same time-period is $p_t=6$ (i.e., one customer payment every 30 seconds). 

Our ``large retailers'' example corresponds to the annual sales of a large retailer in the UK~\cite{tesco2018, tesco2016}. In this case, we have $n=1$ million customers, $k=100$ merchants, $e=\$250$, $e_t=\$100$ and $p_t=15$ (i.e., one payment every 12 seconds).

\begin{table}
	\centering
	\footnotesize
	\caption{Collateral comparison.}
	\label{tbl:opcost}
	\tabsaver
	\begin{tabular}{l c c c}
		\toprule
		\textbf{Solution} &\textbf{Customer} & \multicolumn{2}{c}{\textbf{Operator}} \\
				& & \textbf{Individual} & \textbf{Combined}\\ \hline\\[-0.25cm]
                Channels~\cite{miller2017sprites,LindNEKPS18} & $e \cdot k$  & & \\
                 \ $\hookrightarrow$ Small Shops & \$1,000  &  &  \\
                 \ $\hookrightarrow$ Large Retailers & \$25,000  &  & \\
				\hline\\[-0.25cm]
                Hubs~\cite{heilman2017tumblebit,dziembowski2019perun} & $e$ & \multicolumn{2}{c}{$\sum_n e$} \\
                 \ $\hookrightarrow$ Small Shops & \$10  & \multicolumn{2}{c}{\$1M} \\
                 \ $\hookrightarrow$ Large Retailers & \$250  & \multicolumn{2}{c}{\$250M} \\
                \hline \\[-0.25cm]
                
                \sysname & max($e_t$) & $\sum_k$ max($e_t$)$\cdot p_t$ & $\sum_k$ max($e_t$)$\cdot p_t\cdot k$ 	\\
                 \ $\hookrightarrow$ Small Shops & \$5  & \$3,000 & \$300,000 \\
                 \ $\hookrightarrow$ Large Retailers & \$100  & \$150,000 & \$15M \\
                \bottomrule
	\end{tabular}
\end{table}

\mypara{Customer collateral.}
In payment channels, the customer collateral grows linearly with the number of merchants. This leads to large customer collaterals ($\$1,000$ and $\$25,000$) in our example cases. Payment hubs alleviate this problem, but they still require customers to deposit their anticipated expenditure for a specific duration (e.g., $e=\$250$), and replenish it frequently. \sysname requires that customers deposit a collateral that is never spent (unless there is an attack) and equals the maximum value of payments they may conduct within the blockchain latency period (e.g., $e_t=\$100$).

\mypara{Merchant collateral.}
In payment hubs, the operator's deposit grows linearly with the number of customers in the system, since the operator must deposit funds equal to the sum of the customers' deposits~\cite{dziembowski2019perun,heilman2017tumblebit,gudgeonsok}. That is, the operator collateral must account for all registered customers, \emph{including the currently inactive ones}. Given our examples, this amounts to \$1M and \$250M for the ``small shops'' and the ``larger retailers'' cases, respectively. 

In \sysname, each merchant that operates as a statekeeper deposits enough funds to cover the total value of sales that the merchants conduct within the latency period $t$. Once a transaction gets successfully finalized on the blockchain (i.e., after $t$), the statekeeper can reuse that collateral in approvals of other payments. Thus, the total size of this collateral is independent of the number of registered customers, and is proportional to the volume of sales that merchants handle. In other words, the statekeeper collateral accounts only for the customers that are \emph{active} within the 3-minute latency period. Given our examples, this amounts to \$3,000 and \$150,000 which are three orders of magnitude less than in payment hubs. 
The combined deposit by all statekeepers (merchants) is shown on the last column of Table~\ref{tbl:opcost}. In both of our example cases, the combined deposit is smaller than in payment hubs. 

Designing a payment hub where the operator collateral is proportional to only active customers (and not all registered customers) is a non-trivial task, because the deposited collateral cannot be freely moved from one customer to another~\cite{khalil2018nc}. Some commit-chains variants~\cite{khalilnocust} manage to reduce operator collaterals compared to hubs, but such systems cannot provide secure \emph{and} fast payments. Other commit-chain variants~\cite{khalilnocust} enable fast and safe payments, but require online monitoring that is not feasible for all retail customers. Thus, we do not consider commit chains directly comparable and omit them here (see Section~\ref{sec:related_work} for more details).

\mypara{Cost of operation.} The amount of locked-in funds by system operators allows us to approximate the cost of operating \sysname system with respect to other solutions. For example, in our retail example case, each merchant that runs a statekeeper needs to deposit \$150k. Assuming 7\% annual return of investment, the loss of opportunity for the locked-in money is \$10,500 per year which, together with operational expenses like electricity and Internet, gives the operational cost of \sysname. We consider this an acceptable cost for large retailers. In comparison, a payment hub operator needs to deposit \$250M which means that running such a hub would cost \$17.5M plus operational expenses which is three orders of magnitude more.

\section{Discussion}
\label{sec:discussion}

\subsection{Governance} 
\label{sec:governance}

\update{2}{Snappy has, by design, a majority-based governance model. This means that any majority of statekeeping nodes can decide to ostracize merchants or statekeepers that are no longer deemed fit to participate. For example, a majority can stop processing requests from a merchant who has equivocated (signed conflicting transactions). If more complex governance processes are needed (e.g., first-past-the-post voting)
the Arbiter's smart contract can be extended accordingly.}

\subsection{Censorship Mitigation}
\label{subsec:censorship}

\update{3}{A direct implication of \sysname's governance model is that a majority of statekeeping nodes can discriminate against a particular victim merchant by not responding to its payment approval requests or by delaying the processing of its requests. Such censorship can be addressed in two ways.}

\update{3}{The first approach is technical. Targeted discrimination against a specific victim merchant could be prevented by hiding the recipient merchant's identity during the payment approval process. In particular, by replacing all the fields that could be used to identify the merchant (e.g., ``recipient'', ``amount'') from the payment Intent with a cryptographic commitment. Commitments conceal the merchant's identity from other merchants (statekeepers) during payment approval, but allow the merchant to claim any lost funds from the Arbiter later by opening the commitment. Moreover, if IP address fingerprinting is a concern, merchants can send their approval requests through an anonymity network (e.g., Tor would increase latency by ${\sim}$500ms~\cite{tormetrics2019}) or through the customer's device so as to eliminate direct communication between competing merchants.}

\update{3}{The second mitigation approach is non-technical. In case of known merchant identities and a mutually-trusted authority (e.g., a merchants' association), the victim merchant can file a complaint against constantly misbehaving merchants. In cases of widespread attacks, the victim merchant can reclaim their collaterals in full, deregister from this consortium and join another consortium.}

\subsection{Transaction Privacy}

\update{4}{For on-chain transaction privacy, \sysname inherits the privacy level of the underlying blockchain. For example, Ethereum provides transaction \emph{pseudonymity}, and thus every transaction that is processed with \sysname is pseudonymous once it recorded on the chain.}

\update{4}{During payment approval, the identity of the recipient merchant can be concealed from all the statekeepers using cryptographic commitments, as explained above (see Section~\ref{subsec:censorship}). However, the pseudonym of the customer remains visible to the statekeepers.} 

\update{4}{Well-known privacy-enhancing practices like multiple addresses and mixing services~\cite{ruffing2014coinshuffle, bonneau2014mixcoin} can be used to enhance customer privacy. For example, a \sysname customer could generate several Ethereum accounts, register them with the Arbiter and use each one of them only for a single payment. Once all accounts have been used, the customer can de-register them, generate a new set of accounts, move the money to the new accounts through a mixing service, and register new accounts. The main drawback of this approach is that the user needs to have more collateral locked-in and will pay the registration fee multiple times.}

\update{4}{In the future, privacy-preserving blockchains like ZCash~\cite{hopwood2016zcash} combined with private smart contracts based on Non-Interactive Zero-Knowledge proofs (NIZKs) could address the on-chain confidentiality problem more efficiently and protect the privacy of both the users and the merchants. However, realizing such a secure, efficient and private smartcontract language while achieving decent expressiveness, remains an open research problem~\cite{steffen2019zkay}.}

\subsection{Limitations}
\label{sec:limitations}

\update{8}{The main drawbacks of using \sysname are as follows. First, customers and merchants need to place small collaterals, and thus keep a percentage of their funds locked-in for extended periods of time. Second, \sysname can scale up to a moderate number of statekeeping merchants but cannot support hundreds of thousands or millions statekeeping nodes. In such cases, alternative deployment options can be used (see Appendix~\ref{app:dicentr}). Third, \sysname does not move the payment transactions off the chain and 
thus customers still need to cover the transaction processing fees charged by the blockchain's miners.}

\section{Related Work}
\label{sec:related_work}

\mypara{Payment channels} enable two parties to send funds to each other off the chain, while adding only an opening and a closing transaction on the chain~\cite{dziembowski2019perun,miller2017sprites,LindNEKPS18}. With the opening transaction the two parties lock funds in the channel, which are then used throughout the lifetime of the channel. 
In cases where the two parties send approximately the same amount of funds to each other over time, a payment channel can enable almost indefinite number of near-instant payments.
However, in the retail setting customers send funds unilaterally towards merchants. Moreover, customers transact with several merchants and thus each customer will need to maintain several channels and keep enough funds in them. 

\mypara{Payment networks} utilize the payment channels established between pairs of users to build longer paths~\cite{poon2016bitcoin,network2018cheap}. While this is a straightforward idea, in practice, finding routes reliably is not a trivial task~\cite{prihodko2016flare}. This is because the state of the individual channels changes arbitrarily over time and thus the capacity of the graph's edges fluctuate. Moreover, the unilateral nature of retail payments (customer $\rightarrow$ merchant) quickly depletes the available funds in the individual channels, preventing them from serving as intermediaries to route payments by other customers~\cite{engelmann2017towards}.
Miller et. al~\cite{miller2017sprites} showed that even under favorable conditions (2.000 nodes, customers replenish their accounts every $10$ seconds, maximum expenditure of $\$20$, no attacks), approximately $2\%$ of the payments will fail. At peak hours the ability of the network to route payments from customers to merchants is expected to degrade further. Rebalancing methods~\cite{khalil2017revive} have only a meager effect, primarily because credit cycles are rarely formed in real life~\cite{miller2017sprites}. 

\mypara{Payment hubs} introducing a single central point connecting all customers to all merchants. This eliminates the need of finding routing paths and in theory require a smaller total amount of locked funds for the customers~\cite{khalil2018nc}. However, this approach comes with two main drawbacks. First, it introduces a single point of failure for payment availability. And second, the hub operator needs to deposit very large amount of funds to match the total expenditure of all customers~\cite{dziembowski2019perun,heilman2017tumblebit} and thus will likely charge service fees. 
For instance, a hub that serves $n=1$M customers having in total of \$250M in their channels, must also lock-in that amount in channels with merchants to be able to accommodate payments, in particular during peak hours. Hub operators would charge significant fees to cover the opportunity cost of the large locked-in funds.

\mypara{Commit-chains} are parallel (and not yet peer-reviewed) work~\cite{khalil2018nc,khalilnocust,khalil2019system} that may either reduce or eliminate the operator collaterals compared to payment hubs. The main idea of commit-chains is to maintain a second-layer ledger and make periodic commitments (called \emph{checkpoints}) of its state transitions to the main chain. In one proposed variant~\cite{khalilnocust}, the central operator does not have to place any collateral, but such scheme does not enable fast and safe payments, because users need to wait for the next checkpoint which may take hours or days. Another proposed variant~\cite{khalilnocust} allows safe and fast payment, but has other problems. First, the users need to monitor the blockchain (hourly or daily) and dispute checkpoints if their balance is inaccurately represented. Such monitoring assumption is problematic, especially in use cases like retail with large number of customers using various client devices. Second, although the operator's collateral is slightly lower than those of payment hubs, it still remains very large (e.g., \$200M in our ``large retailers'' use case)~\cite{gudgeonsok}. \sysname enables fast and safe payments with smaller merchants collaterals for customers that remain mostly offline.

\mypara{Side-chains} use a permissioned set of validators to track pending transactions, typically using a BFT consensus protocol~\cite{dilley2016strong,back2014enabling}. 
Such solutions change the trust assumptions of permissionless blockchains significantly, as BFT consensus requires that 2/3 of the validators must be trusted. Side chains also require multiple rounds of communication and have high message complexity.

\mypara{Probabilistic payments} such as MICROPAY1/2/3 can in certain scenarios enable efficient and fast payment approval~\cite{salamon2018orchid, pass2015micropayments}. 
However, such solutions require that the service provided is continuous and granular so that the payments' probabilistic variance becomes negligible. In retail payments, this provides no guarantee that the merchant will be paid the right amount.

\section{Conclusion}
\label{sec:conclusion}

In this paper we have presented \sysname, a novel system that enables merchants to safely accept fast on-chain payments on slow blockchains. We have tailored our solution for settings such retail payments, where currently popular cryptocurrencies are not usable due to their high latency and previous solutions such as payment channels, networks and hubs have significant limitations that prevent their adoption in practice.

\section*{Acknowledgments}
The authors would like to thank the anonymous reviewers,
the shepherd Stefanie Roos, Mary Maller and George Danezis.
This research has been partially supported by the Zurich 
Information Security and Privacy Center (ZISC).

\bibliographystyle{IEEEtran}
\bibliography{references}

\normalsize
\appendix


\subsection{Background on Permissionless Consensus}
\label{app:consensus}

In this appendix we review additional examples of recent permissionless consensus system. We center our review around schemes that do not introduce significant additional security assumptions, and refer the reader to~\cite{bano2017consensus} for a more thorough survey.

\mypara{Proof of Stake} is proposed as an alternative to computational puzzles in Proof of Work,
where the nodes commit (i.e., stake) funds in order to participate in 
the consensus process~\cite{kwon2014tendermint,kiayias2017ouroboros,vasin2014blackcoin,li2017securing}. These solutions are based on an economic-driven model with nodes being 
rewarded for honest behavior and penalized for diverging from the consensus.
While they provide some latency improvements, they do not reach the $3$ seconds averages of
centralized payment processors. 

For example, Ouroboros~\cite{kiayias2017ouroboros} reports a throughput of approximately $300$ transactions per second and a block frequency 10-16 times smaller than that of Bitcoin. Thus, a merchant will still have to wait for several minutes before a transaction can be considered finalized. As another example of Proof-of-Stake system, Algorand~\cite{gilad2017algorand} has a throughput that is 125$\times$ higher than that of Bitcoin, and latency of at least $22$ seconds, assuming a network with no malicious users. 

\mypara{Sharding}. While Proof of Stake techniques involve the whole network in the consensus process,
sharding techniques promise significant performance improvement by splitting the 
network in smaller groups. For example, Elastico~\cite{luu2016secure} achieves four times larger throughput per epoch for network sizes similar to that of Bitcoin. However, it provides no improvement on the confirmation latency (${\sim800}$ seconds). Similarly, Gosig~\cite{li2018gosig} can sustain approximately $4.000$ transactions per second with an ${\sim}1$ minute confirmation time. Rapidchain~\cite{zamanirapidchain} has a throughput of $7,300$ transactions per second with a $70$-second confirmation latency. 

Omniledger~\cite{kokoris2018omniledger} is the only proposal that reports performance (both throughput and latency) compatible with retail payments. However, this comes at a security cost. Their low-latency transactions (i.e., Trust-but-Verify) use shards comprised of only a few (or even one) ``validators''. In the retail setting, this enables malicious validators to launch multi-spending attacks by approving several conflicting payments towards various honest merchants. While the attack and the malicious validator(s) will be uncovered by the core validators within ${\sim}1$ minute, this time-period suffices for an adversary to attack multiple merchants resulting in substantial losses. 

Shasper for Ethereum is expected to be capable of handling $13,000$ transactions per second, while 
optimistic estimates report a minumum block frequency of ${\sim}8$ seconds ~\cite{shasper1, shasper2,buterin2014slasher,buterin2017casper}. However, even with such a small block interval, the latency remains too high for retail purchases, as merchants will need to wait for several blocks for transactions to eventually reach finality.

\subsection{Deployment Alternatives for \sysname}
\label{app:generalizations}

In this appendix we discuss alternative deployment options.

\mypara{Centralized \sysname.}
\label{app:singlesk}
While decentralization is one of the main benefits of
\sysname, there may be cases where having one
central party is also acceptable. In this case, 
the merchants can appoint a single party
to approve or reject payments.
This simplifies our protocols as
each payment requires only one 
approval query instead of several.
However, if the central party is untrusted, 
it still has to deposit a collateral
for the merchants to claim in case it equivocates.
However, relying on a single party comes with drawbacks.
In particular, such a setup adds a centralized layer on top of a fully
decentralized blockchain, while the reliance on a single service provider 
will likely result in increased service fees. 

\begin{figure}
	\centering
	\includegraphics[width=0.5\linewidth]{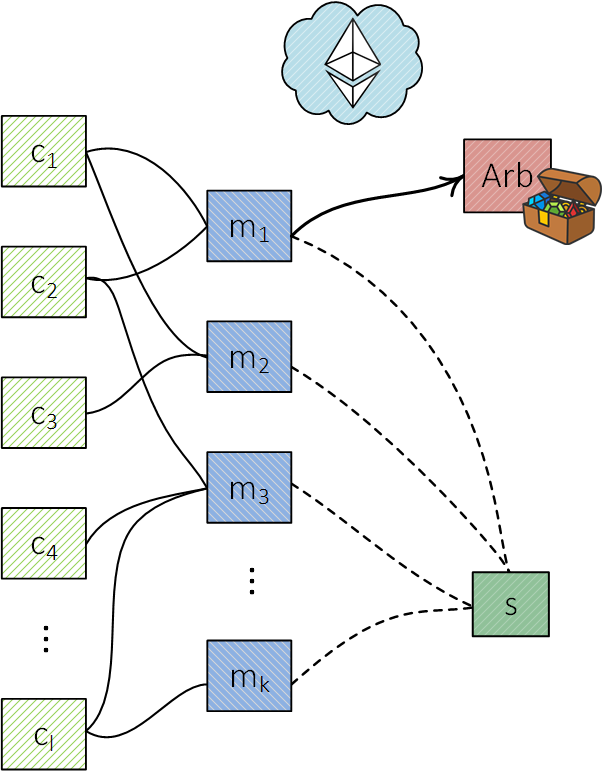}
	\caption{\textbf{Centralized \sysname instance.} The customers $c_i$ initiate payments towards the merchants $m_j$, who then consult with the central statekeeper $s$ and either accept or reject the payment.}
	\label{fig:utp}
	\centering
	\figsaver
\end{figure}

\mypara{Non-statekeeping Merchants}\label{app:dicentr}
In the previous section, we discussed a fully centralized version of \sysname that
allows the system to scale even further and simplifies statekeeping.
While this setup has several advantages, 
the liveness and quality of service of the deployment relies on the single party that
can unilaterally decide on changing the service fees and processes.
An alternative solution that retains most of the centralization benefits
while remaining decentralized (but not fully) is allowing non-statekeeping merchants.
Merchants can join the system and decide if they want to 
deposit a collateral and perform statekeeping tasks or they simply want to be able to receive
payments.

This allows merchants who choose not to keep state to still make use of the \sysname deployment
and enables the system to scale to millions of merchants.
To incentivize statekeeping, a small service fee could be paid by non-statekeeping merchants.
to those who have allocated a statekeeping collateral.
While this goes beyond the scope of this paper, it is probably preferable
if the statekeepers' set remains open to all interested merchants (cf. to being capped or fixed)
and the service fees are determined dynamically based on the offer/demand.
Note that several merchants may be represented by a single statekeeping node.
For example, instead of having small shops match the collateral of large chain stores,
their association could maintain one node that performs the statekeeping for all of them
and routes their to-be-approved payments to the rest of the statekeepers.

This setup has the advantage of being able to use any trust relationships 
(e.g., small merchants trust their association) when/where they exist,
while still allowing a trustless setup for actors who prefer it.

\mypara{One pending transaction.}
\label{app:singlepend}
Much of the complexity of \sysname's protocols comes from the 
fact that the pending transactions of a customer should never exceed in
value the collateral. One possible way to reduce this complexity is
by constraining the number of allowed pending transactions to $1$.
Such a setup allows the customers to conduct at most one
transaction per block, and greatly simplifies the settlement process,
as there are no pending transaction to be provided by the merchant.
We believe that such a setup is realistic and may be preferable
in cases where the customers are unlikely to perform several
transactions within a short period of time. However, this is
an additional assumption that may reduce the utility of the system
in some cases. For example, a customer, who after checking out
realizes that they forgot to buy an item, will have to wait until 
the pending transaction is confirmed.

\mypara{Signature verification batching.}
While the computational cost of verifying an aggregated signature (i.e., two pairings)
is negligible for a personal computer, this is not true for the
Ethereum Virtual Machine, where a pairing operation is considerably more expensive than
a group operation. Our original scheme tackles this cost by having the arbiter verify signatures only in case of disputes. As an additional cost-reduction optimization, the arbiter can use techniques such as those in~\cite{DBLP:conf/eurocrypt/BellareGR98} to batch and check several signatures simultaneously

Let's assume that there are $\ell$ aggregated signatures $(\sigma_1, \ldots, \sigma_\ell)$ to be verified for the messages $(m_1, \ldots , m_\ell)$.
The arbiter samples $\ell$ random field elements $(\gamma_1, \ldots, \gamma_\ell)$ from $\Z_p$.
The verifier considers all the signatures to be valid if
$$e\left(\prod_{i=1}^{\ell}\sigma_i^{\gamma_i},h \right) = \prod_{i=1}^{\ell} e\left(H(m_i)^{\gamma_i}, \prod_{j =1, \tau_{q_i}[j] = 1}^{n}  v_j\right).$$
This roughly halfs the verification costs.
In systems where the number of transactions is considerably more than the number of statekeepers, we can reduce the costs per transaction further.
Assume that there are $\ell$ aggregated signatures $(\sigma_1, \ldots, \sigma_\ell)$ to be verified for the messages $(m_1, \ldots , m_\ell)$ where $\ell \gg n$.
The verifier samples $\ell$ random field elements $(\gamma_1, \ldots, \gamma_\ell)$ from $\Z_p$.
The verifier considers all the signatures to be valid if
$$e\left(\prod_{i=1}^{\ell}\sigma_i^{\gamma_i},h\right) = \prod_{j=1}^{n} e\left(\prod_{i=1, \tau_{q_i}[j]=1 }^{\ell} H(m_i)^{\gamma_i}, v_j\right).$$
The cost of verifying a batch of $\ell$ signatures signed by $n$ merchants is then
$n + 1$ pairing operations and $2 \ell$ group exponentiations in $\Gr$.

\mypara{Dynamic Statekeepers' Consortia.}
So far we have considered only cases where customers joined and withdrew from the system. 
Similarly, one can imagine that statekeepers could decide to leave or new statekeepers may want to join
an existing deployment.
Such functionality can be easily facilitated by modifying through the registration and de-registration 
algorithms available for customers. However, while churning customers do not pose a threat to the security 
of the system, changes in the set of statekeepers may result in attacks.

\begin{figure}[]
	\centering
	\includegraphics[width=0.6\linewidth]{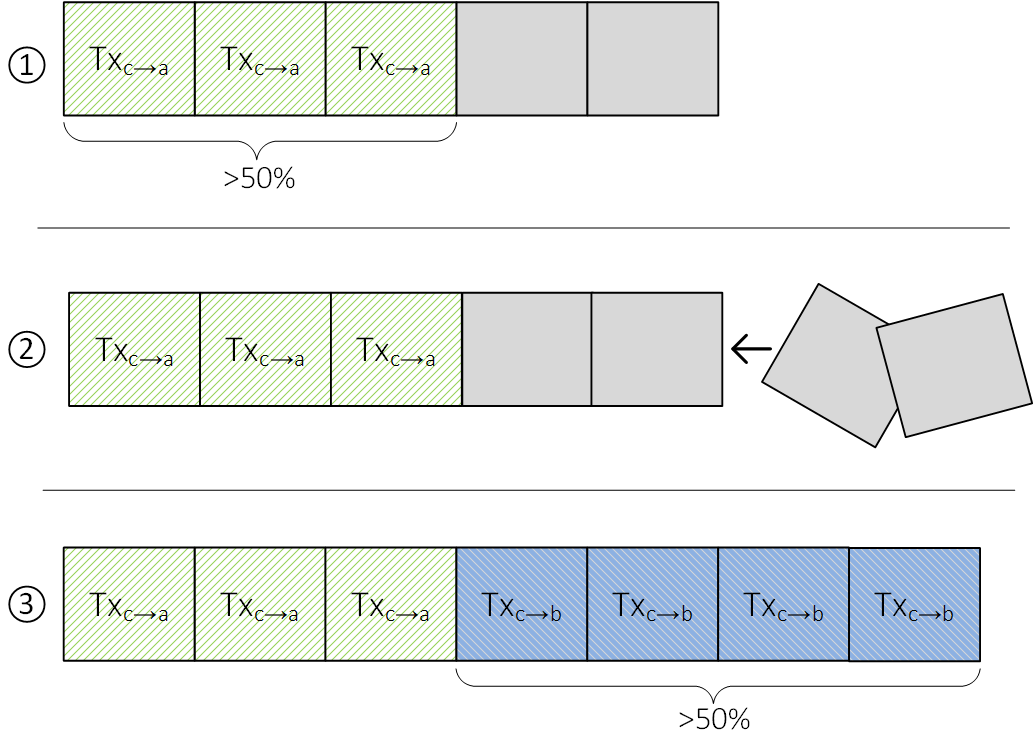}
	\caption{\textbf{Moving majority attack} enables a customer to have two transactions with the same index value approved by disjoint statekeeper majorities.}
	\label{fig:movingmajority}
	\figsaver
\end{figure}

Such an attack could enable a malicious customer to have two approved transactions with the same
index value. As shown in Figure~\ref{fig:movingmajority}, initially the system 
features $5$ statekeepers ($s_1 \ldots s_5$) and a merchant who wants to have 
a transaction $\tx{}$ approved, reaches out to $s_1$, $s_2$ and $s_3$.
Subsequently, two new statekeepers $s_6$ and $s_7$ join the system.
The malicious customer now issues another transaction $\tx{}'$, such that $\tx{i}=\tx{i}'$.
The merchant receiving $\tx{}'$ now queries a majority of the statekeepers (i.e., $s_4$, $s_5$,
$s_6$ and $s_7$) and gets the transaction approved.
At the final stage of the attack $c$ issues another transaction that invalidates
$\tx{}$ and $\tx{}'$ (e.g., a doublespend), while the malicious merchant
quickly claims $\tx{v}'$ from the customer's collateral. 
Because of the way the arbiter processes claims (Line~\ref{line:not_processed2} in Algorithm~\ref{alg:claim_cust}), the honest merchant is now unable to claim $\tx{v}$ from $\col{c}$.
Moreover, none of the statekeepers equivocated and thus no funds can be recouped from their
collateral.

\sysname can safely support a dynamically changing set of statekeepers, if appropriate protection mechanisms are deployed.
For example, such attacks can be prevented:
1) by giving early notice to the merchants about changes in the set (e.g., multistage registration), and
2) by waiting (a few minutes) until all past transactions are finalized in the blockchain before/after every change.

\end{document}